\title{Gradient Descent Ascent in Min-Max Stackelberg Games}
\author{Denizalp Goktas}
\affiliation{
  \institution{Brown University}
  \department{Computer Science}
  \city{Providence}
  \state{Rhode Island}
  \country{USA}}
\email{denizalp_goktas@brown.edu}
\author{Amy Greenwald}
\affiliation{
  \institution{Brown University}
  \department{Computer Science}
  \city{Providence}
  \state{Rhode Island}
  \country{USA}}
\email{amy_greenwald@brown.edu}
\begin{abstract}
    Min-max optimization problems (i.e., min-max games) have attracted a great deal of attention recently as their applicability to a wide range of machine learning problems has become evident. In this paper, we study min-max games with dependent strategy sets, where the strategy of the first player constrains the behavior of the second. Such games are best understood as sequential, i.e., Stackelberg, games, for which the relevant solution concept is Stackelberg equilibrium, a generalization of Nash. One of the most popular algorithms for solving min-max games is gradient descent ascent (GDA). We present a straightforward generalization of GDA to min-max Stackelberg games with dependent strategy sets, but show that it may not converge to a Stackelberg equilibrium. We then introduce two variants of GDA, which assume access to a solution oracle for the optimal Karush Kuhn Tucker (KKT) multipliers of the games' constraints. We show that such an oracle exists for a large class of convex-concave min-max Stackelberg games, and provide proof that our GDA variants with such an oracle converge in $O(\nicefrac{1}{\varepsilon^2})$ iterations to an $\varepsilon$-Stackelberg equilibrium, improving on the most efficient algorithms currently known which converge in $O(\nicefrac{1}{\varepsilon^3})$ iterations. We then show that solving Fisher markets, a canonical example of a min-max Stackelberg game, using our novel algorithm, corresponds to buyers and sellers using myopic best-response dynamics in a repeated market, allowing us to prove the convergence of these dynamics in $O(\nicefrac{1}{\varepsilon^2})$ iterations in Fisher markets. We close by describing experiments on Fisher markets which suggest potential ways to extend our theoretical results, by demonstrating how different properties of the objective function can affect the convergence and convergence rate of our algorithms.
\end{abstract}
\keywords{Equilibrium Computation; Learning in Games; Market Dynamics}
\newcommand{\BibTeX}{\rm B\kern-.05em{\sc i\kern-.025em b}\kern-.08em\TeX}
\begin{document}


\maketitle

\section{Introduction}
\label{sec:intro}

Min-max optimization problems (i.e., zero-sum games) have attracted a great deal of attention recently as their applicability to a wide range of machine learning problems has become evident.
Applications of min-max games include, but are not limited to, generative adversarial networks \cite{sanjabi2018convergence}, fairness in machine learning \cite{dai2019kernel, edwards2016censoring, madras2018learning, sattigeri2018fairness}, generative adversarial imitation learning \cite{cai2019global, hamedani2018iteration}, reinforcement learning \cite{dai2018rl}, adversarial learning \cite{sinha2020certifying}, and statistical learning (e.g., learning parameters of exponential families) \cite{dai2019kernel}. 

These applications often require solving \mydef{min-max games}, which are
\mydef{constrained min-max optimization problems}:
i.e.,\\
$\min_{\outer \in \outerset} \max_{\inner \in \innerset} \obj(\outer, \inner)$,
where $\obj: \outerset \times \innerset \to \R$ is continuous, and $\outerset \subset \R^\outerdim$ and $\innerset \subset \R^\innerdim$ are non-empty and compact.
In this paper, we focus on \mydef{convex-concave} min-max games, in which $\obj$ is convex in $\outer$ and concave in $\inner$.
In the special case of convex-concave objective functions, the seminal minimax theorem holds: i.e.,
$\min_{\outer \in \outerset} \max_{\inner \in \innerset} \obj(\outer, \inner) = \max_{\inner \in \innerset} \min_{\outer \in \outerset} \obj(\outer, \inner)$ \cite{neumann1928theorie}.
This theorem guarantees the existence of a saddle point, i.e., a point that is simultaneously a minimum of $\obj$ in the $\outer$-direction and a maximum of $\obj$ in the $\inner$-direction, which allows us to interpret the optimization problem as a simultaneous-move, zero-sum game, where $\inner^*$ (resp. $\outer^*$) is a best-response of the outer (resp. inner) player to the other's action $\outer^*$ (resp. $\inner^*)$, in which case a saddle point is also called a minimax point or a Nash equilibrium.

More recently, \citeauthor{goktas2021minmax} \cite{goktas2021minmax} have considered the more general problem of solving \mydef{min-max Stackelberg games}, which are constrained min-max optimization problems \mydef{with dependent feasible sets}: i.e., $\min_{\outer \in \outerset} \max_{\inner \in \innerset : \constr(\outer, \inner) \geq \zeros} \obj(\outer, \inner)$,
where $\obj: \outerset \times \innerset \to \R$ is continuous, $\outerset \subset \R^\outerdim$ and $\innerset \subset \R^\innerdim$ are non-empty and compact, and $\constr(\outer, \inner) = \left(\constr[1](\outer, \inner), \hdots, \constr[\numconstrs](\outer, \inner) \right)^T$ with $\constr[\numconstr]: \outerset \times \innerset \to \R$.
These authors observe that the minimax theorem does not necessarily hold assuming dependent feasible sets \cite{goktas2021minmax}.%
\footnote{When a Nash equilibrium exists in a min-max Stackelberg game, the game reduces to a simultaneous-move game and the Stackelberg equilibrium coincides with the Nash.}
As a result, such games are more appropriately viewed as sequential, i.e., Stackelberg, games,%
\footnote{One could also view such games as pseudo-games (also known as abstract economies) \cite{facchinei2007generalized, facchinei2010generalized}, in which the players move simultaneously under the unreasonable assumption that the moves they make will satisfy the game's dependency constraints.
Under this view, the relevant solution concept is generalized Nash equilibrium.}
where the outer player chooses $\outer \in \outerset$ before the inner player responds with their choice of $\hat{\inner} (\outer) \in \innerset$ s.t.\ $\constr (\outer, \hat{\inner} (\outer)) \geq \zeros$.
In these games, the outer player's \mydef{value function}
$\val[\outerset]: \outerset \to \R$ is defined as $\val[\outerset](\outer) = \max_{\inner \in \innerset : \constr(\outer, \inner) \geq \zeros} \obj(\outer, \inner)$.
This function represents the outer player's loss, assuming the inner player chooses a feasible best response, so it is the function the outer player seeks to minimize.
The inner player's value function, $\val[\innerset]: \outerset \to \R$, which they seek to maximize, is simply the objective function given the outer player's action: i.e., $\val[\innerset](\inner; \outer) = \obj(\outer, \inner)$.


One of the most popular algorithms for solving 
min-max games is \mydef{gradient descent ascent} (GDA), which at each iteration carries out gradient descent for $\outer$, the variable being minimized, and gradient ascent for $\inner$, the variable being maximized.
Although \citeauthor{goktas2021minmax} provide first-order methods (FOMs) to solve min-max Stackelberg games with a convex-concave objective function in polynomial time, their methods are \mydef{nested GDA} algorithms, which require two nested gradient update loops.
As such, they do not have the flavor of \mydef{simultaneous GDA}, i.e., an algorithm that simultaneously runs a gradient descent step for the outer player and a gradient ascent step for the inner player.
In this paper, we investigate the behavior of simultaneous GDA variants in min-max Stackelberg games.
The following example---which relies on \citeauthor{goktas2021minmax}'s insight that the direction of steepest descent in Stackelberg games is not given by the gradient of the objective function, but rather by the gradient of the outer player's value function---demonstrates how traditional simultaneous GDA algorithms can fail to converge to Stackelberg equilibria.

\begin{example}
\label{ex:gda-non-convergence}
Consider
the following min-max Stackelberg game: $\min_{\outer[ ] \in [-1, 1]} \max_{\inner[ ] \in [-1, 1] : 1 - (\outer[ ] + \inner[ ]) \geq 0} \outer[ ]^2 + \inner[ ] + 1 $.
The optimal solution (i.e., the Stackelberg equilibrium) of this game is given by $\outer[ ]^* = \nicefrac{1}{2}, \inner[ ]^* = \nicefrac{1}{2}$.
Consider the following simultaneous GDA algorithm,
similar to that given by \citeauthor{nedic2009gda} \cite{nedic2009gda}:%
\footnote{All notation is defined
in the next section.}
$$\outer[ ][\iter + 1] = \project[{\outer[ ] \in \outerset}] \left[\outer[ ][\iter] - \grad[\outer] \obj(\outer[ ][\iter], \inner[ ][\iter])\right] \enspace ,$$ $$ \inner[ ][\iter + 1] = \project[\left\{{\inner[ ] \in \innerset: \constr(\outer[ ][\iter], \inner[ ]) \geq 0} \right\}] \left[\inner[ ][\iter] + \grad[\inner] \obj(\outer[ ][\iter], \inner[ ][\iter]) \right] \enspace .$$
Applied to this game, this algorithm yields the following update rules: 
$$\outer[ ][\iter+1] = \project[{\outer[ ] \in [-1,1]}] \left[\outer[ ][\iter] - 2 \outer[ ][\iter] \right] \enspace ,$$ 
and 
$$\inner[ ][\iter + 1] = \project[\left\{{\inner[ ] \in [-1,1]: \inner[ ] \leq 1 - \outer[ ][\iter]}\right\}] \left[\inner[ ][\iter] + 1 \right]
\enspace .$$
Starting at $\outer[ ][0] = 0, \inner[ ][0] = 0$, it then proceeds as follows: $\outer[ ][1] = 0, \inner[ ][1] = 1$; $\outer[ ][2] = 0, \inner[ ][2] = 1$; and so on, thus converging to a point which is not a Stackelberg equilibrium.
Indeed, the algorithm is not even stable when initialized at the Stackelberg equilibrium.
\end{example}
In this paper, we introduce two variants of simultaneous GDA, hereafter GDA unless otherwise noted, that converge in polynomial time to Stackelberg equilibria in a large class of min-max Stackelberg games.
Our first algorithm is a deterministic algorithm whose average iterate converges in $O(\nicefrac{1}{\varepsilon^2})$ iterations to an $\varepsilon$-Stackelberg equilibrium in convex-\emph{strictly}-concave min-max Stackelberg games under standard smoothness assumptions. Our second algorithm is a randomized algorithm whose expected output converges to an $\varepsilon$-Stackelberg equilibrium in $O(\nicefrac{1}{\varepsilon^2})$ iterations in convex-concave min-max Stackelberg games under the same assumptions.
The iteration complexity of these algorithms improve on the 
nested gradient descent ascent algorithm provided by \citeauthor{goktas2021minmax}   \cite{goktas2021minmax}, which computed an equilibrium in $O(\nicefrac{1}{\varepsilon^3})$ iterations, and thereby show that the same convergence rate as that of simultaneous GDA assuming independent strategy sets \cite{nedic2009gda} can be achieved.

Having developed GDA algorithms for the dependent strategy set setting, we conclude by using them to compute competitive equilibria in Fisher markets, which have been shown to be instances of min-max Stackelberg games \cite{goktas2021minmax}, so that their Stackelberg equilibria coincide with the competitive equilibria.
Applied to the computation of competitive equilibria in Fisher markets, our GDA algorithms correspond to myopic best-response dynamics.
In a related dynamic price-adjustment process called t\^atonnement \cite{walras}, which also converges to competitive equilibria, sellers adjust their prices incrementally, while buyers respond optimally to the sellers' price adjustments.
Our dynamics give rise to a novel t\^atonnement process in which both buyers and sellers exhibit bounded rationality \cite{simon1997models}, with sellers and buyers adjusting their prices and demands incrementally, respectively.

Finally, we run experiments with our randomized GDA algorithm which suggest that our randomized algorithm can be derandomized, because in all our experiments the average iterates converge to a competitive equilibrium.
Our experiments 
suggest avenues for future work, namely investigating how varying the degree of smoothness of the objective function can impact convergence rates.

\paragraph{Related Work}

Our model of min-max Stackelberg games seems to have first been studied by \citeauthor{wald1945maximin}, under the posthumous name of Wald's maximin model~\cite{wald1945maximin}.
A variant of Wald's maximin model is the main paradigm used in robust optimization, a fundamental framework in operations research for which many methods have been proposed \cite{ben2015oracle, ho2018online, postek2021firstorder}. 
\citeauthor{shimizu1980minmax}  \cite{shizimu1981stackelberg,shimizu1980minmax} proposed the first algorithm to solve min-max Stackelberg games
via a relaxation to a constrained optimization problem with infinitely many constraints, which nonetheless seems to perform well in practice.
More recently, \citeauthor{segundo2012evolutionary} \cite{segundo2012evolutionary} proposed an evolutionary algorithm for these games, but they provided no guarantees.
As pointed out by \citeauthor{postek2021firstorder}, all prior methods either require oracles and are stochastic in nature \cite{ben2015oracle}, or rely on a binary search for the optimal value, which can be computationally complex \cite{ho2018online}.
The algorithms we propose in this paper circumvent the aforementioned issues and can be used to solve a large class of convex robust optimization problems in a simple and efficient manner.

Much progress has been made recently in solving min-max games with independent strategy sets, both in the convex-concave case and in the non-convex-non-concave case. 
For the former case, when $\obj$ is $\mu_\outer$-strongly-convex in $\outer$ and $\mu_\inner$-strongly-concave in $\inner$, gradient-descent-ascent (GDA)-based methods, compute a solution in $\tilde{O}(\mu_\inner + \mu_\outer)$ iterations \cite{mokhtari2020convergence, tseng1995variational, nesterov2006variational, gidel2020variational, lin2020near, alkousa2020accelerated} and $\tilde{\Omega}(\sqrt{\mu_\inner \mu_\outer})$ iterations \cite{ibrahim2019lower, zhang2020lower, lin2020near, alkousa2020accelerated}.
For the special case where $\obj$ is $\mu_\outer$-strongly convex in $\outer$ and linear in $\inner$ there exist methods that converge to an $\varepsilon$-approximate solution in $O(\sqrt{\nicefrac{\mu_\outer}{\varepsilon}})$ iterations \cite{juditsky2011first, hamedani2018primal, zhao2019optimal}.
When the strong concavity or linearity assumptions of $\obj$ on $\inner$ are dropped, and
$\obj$ is assumed to be $\mu_\outer$-strongly-convex in $\outer$ but only concave in $\inner$, there exist methods that converge to an $\varepsilon$-approximate solution in $\tilde{O}(\nicefrac{\mu_\outer}{\varepsilon})$ iterations \cite{ouyang2018lower, zhao2019optimal, lin2020near} with a lower bound of $\tilde{\Omega}\left(\sqrt{\nicefrac{\mu_\outer}{\varepsilon}}\right)$ iterations.
When $\obj$ is simply assumed to be convex-concave, a multitude of well-known first order methods exist that solve for an $\varepsilon$-approximate solution with $\tilde{O}\left(\varepsilon^{-1}\right)$ iteration complexity \cite{nemirovski2004prox, nesterov2007dual, tseng2008accelerated}.
When $\obj$ is assumed to be non-convex-$\mu_\inner$-strongly-concave, there exist methods that converge to various solution concepts, all in $O(\varepsilon^{-2})$ iteration \cite{sanjabi2018stoch, jin2020local, lin2020gradient, rafique2019nonconvex, lu2019block}.
%
When $\obj$ is non-convex-non-concave various algorithms have been proposed to compute first-order Nash equilibrium \cite{lu2019block, nouiehed2019solving}, with at best an upper bound of $\tilde{O}\left(\varepsilon^{-2.5}\right)$ iterations \cite{lin2020near, ostrovskii2020efficient}.
When $\obj$ is non-convex-non-concave and the desired solution concept is a ``local'' Stackelberg equilibrium, there exist many algorithms to compute a solution \cite{jin2020local, rafique2019nonconvex, lin2020gradient}, with the most efficient ones converging to an $\varepsilon$-approximate solution in $\tilde{O}\left( \varepsilon^{-3}\right)$ iterations \cite{thekumparampil2019efficient, zhao2020prim, lin2020near}.



Extensive-form games in which players' strategy sets can depend on other players' actions have been studied by \citeauthor{davis2019solving} \cite{davis2019solving} assuming payoffs are bilinear, and by \citeauthor{farina2019regret} \cite{farina2019regret} for another specific class of convex-concave payoff functions.
\citeauthor{fabiani2021local} \cite{fabiani2021local} and \citeauthor{kebriaei2017discrete} \cite{kebriaei2017discrete} study more general settings than ours, namely non-zero-sum Stackelberg games with more than two players.
Both sets of authors derive convergence guarantees
assuming specific payoff structures, but their algorithms do not converge in polynomial time.


Min-max Stackelberg games naturally model various economic settings. 
They are related to abstract economies, first studied by \citeauthor{arrow-debreu} \cite{arrow-debreu}; however, the solution concept that has been the focus of this literature is generalized Nash equilibrium \cite{facchinei2007generalized, facchinei2010generalized}, which, like Stackelberg, is a weaker solution concept than Nash, but which makes the arguably unreasonable assumption that the players move simultaneously and nonetheless satisfy the constraint dependencies on their strategies imposed by one another's moves.

Optimal auction design problems can be seen as min-max Stackelberg games.
\citeauthor{duetting2019optimal} \cite{duetting2019optimal} propose a neural network architecture called RegretNet to solve auctions; however, as their objectives can be non-convex-concave,
our guarantees do not apply.

In this paper, we observe that solving for the competitive equilibrium of a Fisher market can also be seen as solving a (convex-concave) min-max Stackelberg game.
The study of the computation of competitive equilibria in Fisher markets was initiated by \citeauthor{devanur2002market} \cite{devanur2002market}, who provided a polynomial-time method for the case of linear utilities.
\citeauthor{jain2005market} \cite{jain2005market} subsequently showed that a large class of Fisher markets could be solved in polynomial-time using interior point methods.
Recently, \citeauthor{gao2020polygm} \cite{gao2020polygm} studied an alternative family of first-order methods for solving Fisher markets (only; not min-max Stackelberg games more generally), assuming linear, quasilinear, and Leontief utilities, as such methods can be more efficient when markets are large.


\section{Preliminaries}
\label{sec:prelim}

\paragraph{Notation}

We use Roman uppercase letters to denote sets (e.g., $X$),
bold uppercase letters to denote matrices (e.g., $\allocation$), bold lowercase letters to denote vectors (e.g., $\price$), and Roman lowercase letters to denote scalar quantities, (e.g., $c$).
We denote the $i$th row vector of a matrix (e.g., $\allocation$) by the corresponding bold lowercase letter with subscript $i$ (e.g., $\allocation[\buyer])$.
Similarly, we denote the $j$th entry of a vector (e.g., $\price$ or $\allocation[\buyer]$) by the corresponding Roman lowercase letter with subscript $j$ (e.g., $\price[\good]$ or $\allocation[\buyer][\good]$).
We denote the vector of ones of size $\numbuyers$ by $\ones[\numbuyers]$.
We denote the set of integers $\left\{1, \hdots, n\right\}$ by $[n]$, the set of natural numbers by $\N$, the set of positive natural numbers by $\N_+$ the set of real numbers by $\R$, the set of non-negative real numbers by $\R_+$, and the set of strictly positive real numbers by $\R_{++}$.
We denote the orthogonal projection operator onto a convex set $C$ by $\project[C]$, i.e., $\project[C](\x) = \argmin_{\y \in C} \left\|\x - \y \right\|^2$.

\paragraph{Problem Definition}

A \mydef{min-max game with dependent strategy sets}, denoted $(\outerset, \innerset, \obj, \constr)$, is a two-player, zero-sum game, where one player, who we call the outer, or $\outer$, (resp.\ inner, or $\inner$) player, is trying to minimize their loss (resp.\ maximize their gain), defined by a continuous \mydef{objective function} $\obj: X \times Y \rightarrow \mathbb{R}$, by taking an action from their \mydef{strategy set} $\outerset \subset \R^\outerdim$ (resp. $\innerset \subset \R^\innerdim$) s.t.\ $\constr(\outer, \inner) \geq 0$, where $\constr(\outer, \inner) = \left(\constr[1](\outer, \inner), \hdots, \constr[\numconstrs](\outer, \inner) \right)^T$ with $\constr[\numconstr]: \outerset \times \innerset \to \R$.
A strategy profile $(\outer, \inner) \in \outerset \times \innerset$ is said to be \mydef{feasible} iff for all $\numconstr \in [\numconstrs]$, $\constr[\numconstr](\outer, \inner) \geq 0$.
The function $\obj$ maps a pair of feasible actions taken by the players $(\outer, \inner) \in \outerset \times \innerset$ to a real value (i.e., a payoff), which represents the loss (resp.\ the gain) of the outer (resp.\ inner) player.
A min-max game is said to be convex-concave if the objective function $\obj$ is convex-concave.

One way to see this game is as a \mydef{Stackelberg game}, i.e., a sequential game with two players, where WLOG, we assume that the minimizing player moves first and the maximizing player moves second.
The relevant solution concept for Stackelberg games is the \mydef{Stackelberg equilibrium}:
A strategy profile $\left(\outer^{*}, \inner^{*} \right) \in \outerset \times \innerset$ s.t.\ $\constr(\outer^{*}, \inner^{*}) \geq \zeros$ is an $(\epsilon, \delta)$-Stackelberg equilibrium if
$$\max_{\inner \in \innerset : \constr(\outer^{*}, \inner) \geq 0} \obj \left( \outer^{*}, \inner \right) - \delta \leq \obj \left( \outer^{*}, \inner^{*} \right) \leq \min_{\outer \in \outerset} \max_{\inner \in \innerset : \constr(\outer, \inner) \geq 0} \obj \left( \outer, \inner \right) + \epsilon
\enspace .$$
%
Intuitively, an $(\varepsilon, \delta)$-Stackelberg equilibrium is a point at which the outer (resp.\ inner) player's payoff is no more than $\varepsilon$ (resp.\ $\delta$) away from its optimum.
A $(0,0)$-Stackelberg equilibrium is guaranteed to exist in min-max Stackelberg games \cite{goktas2021minmax}.
Note that when $\constr (\outer, \inner) \geq 0$ for all $(\outer, \inner) \in \outerset \times \innerset$, the game reduces to a min-max game,%
\footnote{We use the terminology ``min-max game'' to mean a min-max game with independent strategy sets, and ``min-max Stackelberg game'' to mean a min-max game with dependent strategy sets.}
for which, by the min-max theorem, a Nash equilibrium exists \cite{neumann1928theorie}.

\textbf{Mathematical Preliminaries}
Given $A \subset \R^\outerdim$, the function $\obj: A \to \R$ is said to be $\lipschitz[\obj]$-\mydef{Lipschitz-continuous} iff $\forall \outer_1, \outer_2 \in X, \left\| \obj(\outer_1) - \obj(\outer_2) \right\| \leq \lipschitz[\obj] \left\| \outer_1 - \outer_2 \right\|$.
If the gradient of $\obj$, $\grad \obj$, is $\lipschitz[\grad \obj]$-Lipschitz-continuous, we refer to $\obj$ as $\lipschitz[\grad \obj]$-\mydef{Lipschitz-smooth}.
A function $\obj: A \to \R$ is $\mu$-\mydef{strongly convex} if $\obj(\outer_1) \geq \obj(\outer_2) + \left< \grad[\outer] \obj(\outer_2), \outer_1 - \outer_2 \right> + \nicefrac{\mu}{2} \left\| \outer_1 - \outer_1 \right\|^2$, and $\mu$-\mydef{strongly concave} if $-\obj$ is $\mu$-strongly convex.

\if 0
\paragraph{Online Convex Optimization (OCO)}

OCO is a toolbox used to solve decision making problems in dynamic environments.
An OCO problem is a tuple $\ocom = \left(\iters, \{(\outerset^{(\iter)}, \loss[{\outerset^{(\iter)}}][\iter])\}_{\iter= 1}^\iters \right)$ representing a dynamic environment which comprises a finite time horizon $\iters$, a compact, convex feasible set $\outerset$,\amy{should be ``a sequence of'', n'est-ce pas?} and a sequence of convex differentiable loss
functions $\{\loss[][\iter] \}_{\iter = 1}^{\iters}$, where $\loss[][\iter]: \outerset^\iter \subset \outerset \to \R$ for all $\iter \in [\iters]$.%
\footnote{Note that domain of the loss function, i.e., the constraint set, changing over time is a slight deviation from the online convex optimization model proposed by \citeauthor{zinkevich2003online} \cite{zinkevich2003online}.
Nonetheless, the results we present in this paper assume that the constraint set is revealed to the decision maker beforehand, which implies that the results derived by \citeauthor{zinkevich2003online} apply \cite{neely2017online}. \amy{discuss!}}
A solution to an OCO problem $\ocom$ is a sequence of \mydef{decision\sdeni{s}{ variables}} \amy{i like decisions better! to me that means an assignment of decision variables to values, which is what i think a solution is.} $\{ \outer^{(\iter)} \}_{\iter = 1}^{\iters} \subset \outerset$.
An ideal solution is one that minimizes the \mydef{average regret}, given by $\frac{1}{\iters}\sum_{\iter = 1}^{\iters} \loss[][\iter](\outer^{(\iter)}) - \frac{1}{\iters}\min_{\outer \in \outerset} \sum_{\iter = 1}^{\iters} \loss[][\iter](\outer)$.

A well-known first-order method that can be used to solve an OCO problem is \mydef{Online Mirror Descent (OMD)}.
For some initial iterates  $\u^{(0)} = 0$ and $\outer^{(\iter + 1)} \in \outerset$,\amy{initial, so the $\outer$ superscript seems wrong.}, at
each time step $t$, OMD performs the following update in the dual space $\outerset^\star$:
    $\u^{(t+1)} = \u^{(t)} - \learnrate[ ] \grad[\outer] \loss[ ][\iter] (\outer^{(\iter)})$;
and then maps\samy{}{, via the \mydef{link function},} the result $\u^{(t+1)}$ back to the primal space $\outerset$:
    $\outer^{(\iter + 1)} \in \argmin_{\outer \in \outerset} \left\{ \regul(\outer) - \left< \u^{(t+1)} , \outer \right> \right\}$,
where $\regul : \outerset \to \R$ is a strongly-convex differentiable function\samy{}{, often called the \mydef{regularizer}}.
When $\regul(\outer) = \frac{1}{2} \left\|\outer \right\|^2_2$, OMD reduces to projected online gradient descent, given by the update rule:
    $\outer^{(\iter + 1)}  = \proj[\outerset] \left(\outer^{(\iter )} - \eta \grad[\outer] \loss[ ][\iter] (\outer^{(\iter)}) \right)$,
where $\proj[\outerset](\u) = \argmin_{\outer \in \outerset} \left\| \u - \outer \right\|_2^2$.
The following theorem bounds the cumulative regret and the weighted regret with uniform weights, i.e., the \mydef{expected regret},\amy{there is only one bound below} of OMD \cite{kakade2012regularization}:

\begin{theorem}[\cite{zinkevich2003online}]\label{thm:bounded-regret}
Given an OCO problem $\ocom$.
Let $c = \max_{\outer \in \outerset} \left\| \outer \right\|$, and let $\{\loss[ ][\iter] \}_{\iter = 1}^{\iters}$ be a sequence of $\lipschitz$-Lipschitz loss functions such that for all $\iter \in [\iters]$, $\loss[ ][\iter]: \R^\outerdim \to \R$ is $\lipschitz$-Lipschitz with respect to the norm $\left\| \cdot \right\|_*$.
Then, if $\learnrate[ ] = \frac{c }{\lipschitz\sqrt{2T}}$, projected online gradient descent achieves, respectively, the following cumulative and expected regret bounds
\amy{there is now only one bound here?}
\begin{align}
    \frac{1}{\iters} \sum_{\iter = 1}^{\iters} \loss[][\iter] (\outer^{(\iter)}) - \min_{\outer \in \outerset} \frac{1}{\iters} \sum_{\iter = 1}^{\iters} \loss[][\iter] (\outer) \leq c \lipschitz \sqrt{\frac{2}{\iters}}
\end{align}
\amy{might move bound inline in 2-column format}
\end{theorem}
\fi
\section{(Simultaneous) GDA}\label{sec:GDA}

In this section, we explore 
GDA algorithms for min-max Stackelberg games.
We prove that a min-max Stackelberg game with dependent strategy sets is equivalent to a three-player game with independent strategy sets, with the Lagrangian as the payoff function and a third player who chooses the optimal Karush Kuhn Tucker (KKT) multipliers \cite{kuhn1951kkt}.
This theorem provides a straightforward path to generalizing 
GDA to the dependent strategy sets setting. All the results in this paper are based on the following assumptions:
\begin{assumption}
\label{main-assum}
1. (\emph{Slater's condition} \cite{slater1959convex, slater2014convex})~$\forall \outer \in \outerset, \exists \widehat{\inner} \in \innerset$ s.t.\ $g_{\numconstr}(\outer, \widehat{\inner}) > \zeros$, for all $\numconstr \in [\numconstrs]$;
2.~$\obj, \constr[1], \ldots, \constr[\numconstrs]$ are continuous and convex-concave; and 3.~$\grad[\outer] f, \grad[\outer] \constr[1], \ldots, \grad[\outer] \constr[\numconstrs]$ are 
continuous in $(\outer, \inner)$.
\end{assumption}

We note that these assumptions are in line with previous work geared towards solving min-max Stackelberg games with first-order methods (FOMs) \cite{goktas2021minmax}.
Part 1 of \Cref{main-assum},
Slater's condition, is a constraint qualification condition which is necessary to derive the optimality conditions for the inner player's payoff maximization problem.
This condition is a standard constraint qualification in the convex programming literature \cite{boyd2004convex}; without it the problem becomes analytically intractable.
Part 2 of \Cref{main-assum} is a standard assumption in exploratory
studies of min-max games (e.g., \cite{nouiehed2019solving}).
Finally, we note that Part 3 of \Cref{main-assum} can be replaced by a subgradient boundedness assumption; however, for simplicity, we assume this stronger condition.

Building on ideas developed by \citeauthor{nouiehed2019solving} \cite{nouiehed2019solving} and \citeauthor{jin2020local} \cite{jin2020local} for min-max games,
\citeauthor{goktas2021minmax} \cite{goktas2021minmax} recently FOMs that solve min-max Stackelberg games.
However, the authors leave open the problem of developing a (simultaneous) gradient descent ascent (GDA) algorithm
with polynomial-time guarantees.
\Cref{ex:gda-non-convergence} demonstrates that the 
GDA algorithms used to solve min-max games \cite{daskalakis2018limit,lin2020gradient,nedic2009gda} do not solve min-max Stackelberg games.
Nonetheless, we resolve this open question by developing a 
GDA algorithm that approximates Stackelberg equilibria.

We define $\lang[\outer](\inner, \langmult) = \obj(\outer, \inner) + \sum_{\numconstr = 1}^\numconstrs \langmult[\numconstr] \constr[\numconstr](\outer, \inner)$ to be the Lagrangian associated with the outer player's value function, or equivalently, the inner player's payoff maximization problem, given the outer player's strategy $\outer \in \outerset$.
Our algorithms rely heavily on the next theorem, proven in \Cref{sec:app-GDA-proofs}, which states that any min-max Stackelberg game with dependent strategy sets is equivalent to a three-player min-max-min game with independent strategy sets and payoff function $\lang[\outer](\inner, \langmult)$, where the inner two players, $\inner$ and $\langmult$, can move simultaneously (i.e., they play a Nash equilibrium), but the outer player, $\outer$, must move first.

\begin{theorem}
\label{thm:stackelberg-equiv}
Under \Cref{main-assum}, any min-max Stackelberg game with dependent strategy sets $(\outerset, \innerset, \obj, \constr)$
can be viewed as a min-max game with independent strategy sets: i.e.,
%
\begin{align}
    \min_{\outer \in \outerset} \max_{\inner \in \innerset : \constr(\outer, \inner) \geq \zeros} \obj(\outer, \inner) 
    &= \min_{\outer \in \outerset} \max_{\inner \in \innerset } \min_{\langmult \geq \zeros} \lang[\outer](\inner, \langmult) \\
    &= \min_{\langmult \geq \zeros} \min_{\outer \in \outerset}  \max_{\inner \in \innerset }  \lang[\outer](\inner, \langmult) \enspace .
\end{align}
\end{theorem}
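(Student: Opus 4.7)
My plan is to establish the two equalities separately, treating the first as an indicator-function identity and the second as an application of convex-programming strong duality combined with a trivial swap of two minimizations.

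For the first equality, I would fix $\outer \in \outerset$ and examine the inner expression $\min_{\langmult \geq \zeros} \lang[\outer](\inner, \langmult) = \obj(\outer, \inner) + \min_{\langmult \geq \zeros} \sum_{\numconstr=1}^{\numconstrs} \langmult[\numconstr] \constr[\numconstr](\outer, \inner)$. If $\constr(\outer, \inner) \geq \zeros$, then taking $\langmult = \zeros$ (and noting that any positive $\langmult[\numconstr]$ only increases the sum) yields a minimum of $0$, so the whole expression equals $\obj(\outer, \inner)$. If some $\constr[\numconstr](\outer, \inner) < 0$, then sending $\langmult[\numconstr] \to \infty$ drives the expression to $-\infty$. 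Therefore $\min_{\langmult \geq \zeros} \lang[\outer](\inner, \langmult)$ acts as a $\{0, -\infty\}$ indicator of feasibility, and maximizing it over $\inner \in \innerset$ recovers exactly $\max_{\inner \in \innerset : \constr(\outer, \inner) \geq \zeros} \obj(\outer, \inner)$. Taking $\min$ over $\outer$ on both sides gives the first equality. Note that this step uses no convexity or constraint qualification whatsoever.

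For the second equality, I would first swap the inner $\max_{\inner}$ and $\min_{\langmult}$ using strong duality for convex programs. Fix $\outer \in \outerset$ and consider the concave program $\max_{\inner \in \innerset : \constr(\outer, \inner) \geq \zeros} \obj(\outer, \inner)$. By Part 2 of \Cref{main-assum}, $\obj(\outer, \cdot)$ and each $\constr[\numconstr](\outer, \cdot)$ are concave on the convex compact set $\innerset$, and by Part 1 (Slater's condition) there exists a strictly feasible $\widehat{\inner}$. Standard convex-programming strong duality (e.g., Boyd and Vandenberghe) then yields $\max_{\inner \in \innerset : \constr(\outer, \inner) \geq \zeros} \obj(\outer, \inner) = \min_{\langmult \geq \zeros} \max_{\inner \in \innerset} \lang[\outer](\inner, \langmult)$. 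Combined with the first equality, this gives $\max_{\inner \in \innerset} \min_{\langmult \geq \zeros} \lang[\outer](\inner, \langmult) = \min_{\langmult \geq \zeros} \max_{\inner \in \innerset} \lang[\outer](\inner, \langmult)$ for every $\outer \in \outerset$. Taking $\min_{\outer \in \outerset}$ on both sides and then exchanging the two independent minimizations over $\outer$ and $\langmult$ (which is always valid, since the objective is identical) produces $\min_{\langmult \geq \zeros} \min_{\outer \in \outerset} \max_{\inner \in \innerset} \lang[\outer](\inner, \langmult)$, establishing the second equality.

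The main subtlety I anticipate is justifying the strong duality step cleanly, since the dual feasible set $\{\langmult : \langmult \geq \zeros\}$ is not compact and hence Sion's minimax theorem does not directly apply. Rather than invoking a minimax theorem, I would cite the classical convex-programming strong duality result, whose hypotheses (concavity of objective and constraints in $\inner$, convexity of $\innerset$, and a Slater point) are precisely guaranteed, uniformly in $\outer$, by \Cref{main-assum}. Everything else is book-keeping: the indicator-function calculation is elementary, and the swap of the two outer minimizations requires no assumption.
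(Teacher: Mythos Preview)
Your proposal is correct and follows essentially the same route as the paper: both proofs obtain the first equality from the standard indicator-function interpretation of $\min_{\langmult \geq \zeros} \lang[\outer](\inner,\langmult)$, and the second by invoking convex-programming strong duality (via Slater's condition) to swap $\max_{\inner}$ and $\min_{\langmult}$, followed by the trivial interchange of the two outer minimizations. Your write-up is simply more explicit than the paper's about the min-swap and about the fact that the first equality needs no convexity or constraint qualification.
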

\if 0
\begin{corollary}
\label{corr:min-max-stackelberg-as-simult}
Assume as in \Cref{thm:stackelberg-equiv}.
Then
%
$\min_{\outer \in \outerset} \max_{\inner \in \innerset : \constr(\outer, \inner) \geq \zeros} \obj(\outer, \inner) =  \min_{\langmult \geq \zeros, \outer \in \outerset} \max_{\inner \in \innerset } \lang[\outer]( \inner, \langmult)$. 
\end{corollary}
\fi
Note, that $\min_{\langmult \geq \zeros, \outer \in \outerset} \max_{\inner \in \innerset } \lang[\outer]( \inner, \langmult)$ is a non-convex-concave min-max game, since $\lang[\outer]( \inner, \langmult)$ is non-convex-concave, due to the term $\sum_{\numconstr = 1}^\numconstrs \langmult[\numconstr] \constr[\numconstr](\outer, \inner)$, which is not guaranteed to be convex in $\langmult$ and $\outer$ jointly.\footnote{We recall that a function $\obj: A \times B \to \R$ is jointly convex in $a$ and $b$ if it is convex in $(a, b)$, and that the product of two convex functions need not give rise to a convex function.
In particular, $\langmult \cdot \constr(\outer, \inner)$ is not necessarily jointly convex in $(\langmult, \outer)$.}
This reduction of a min-max game with dependent strategy sets to one with independent strategy sets does not imply that the game is solvable in polynomial time, since for non-convex-concave min-max games the computation of (global) minimax points
is NP-Hard \cite{daskalakis2020complexity}.
Nonetheless, \Cref{thm:stackelberg-equiv} is suggestive of a new algorithm, which we call \mydef{gradient descent descent ascent (G2DA)} (\Cref{alg:g2da}), which is essentially 
GDA, but run on the two-player non-convex-concave min-max game with independent strategy sets $\min_{\langmult \geq \zeros, \outer \in \outerset} \max_{\inner \in \innerset} \lang[\outer]( \inner, \langmult)$.

\begin{algorithm}[ht]
\caption{Gradient Descent Descent Ascent (G2DA)}
\label{alg:g2da}
\textbf{Inputs:} $\langmults, \outerset, \innerset, \obj, \constr, \learnrate[][\langmult], \learnrate[][\outer], \learnrate[][\inner], \iters, \langmult^{(0)}, \outer^{(0)}, \inner^{(0)}$ \\ 
\textbf{Output:} $\outer^{*}, \inner^{*}$
\begin{algorithmic}[1]
\For{$\iter = 1, \hdots, \iters -1$}
    \State Set $\langmult^{(\iter + 1)} = \project[\R_+] \left( \langmult^{(\iter)} - \learnrate[\iter][\langmult] \grad[\langmult] \lang[{\outer[][\iter]}](\inner[][\iter], \langmult[][\iter]) \right)$
    
    \State Set $\outer^{(\iter +1)} = \project[\outerset] \left( \outer^{(\iter)} - \learnrate[\iter][\outer] \grad[\outer] \lang[{\outer[][\iter]}](\inner[][\iter], \langmult[][\iter])  \right)$
    
    \State Set $\inner^{(\iter +1)} = \project[{
    \innerset
    }] \left( \inner^{(\iter)} + \learnrate[\iter][\inner] \grad[\inner] \lang[{\outer[][\iter]}](\inner[][\iter], \langmult[][\iter])  \right)$
\EndFor
\State \Return $(\outer^{(\iters)}, \inner^{(\iters)})$
\end{algorithmic}
\end{algorithm}

Unfortunately, 
as shown next
, G2DA
does not converge to Stackelberg equilibria in general.
\begin{example}
\label{ex:g2da-non-convergence}
\if 0
Consider the min-max Stackelberg game with dependent strategy sets:
\begin{align}
    \min_{\outer[ ] \in [-1, 1]} \max_{\inner[ ] \in [-1, 1] : 0 \leq 1 - (\outer[ ] + \inner[ ])} \outer[ ]^2 + \inner[ ] + 1 \enspace .
\end{align}
\fi
Recall \Cref{ex:gda-non-convergence}.
The Lagrangian associated with the outer player's value function is $\lang(\outer[ ], \inner[ ], \langmult[ ]) = \outer[ ]^2 + \inner[ ] + \langmult[ ] (1 - (\outer[ ] + \inner[ ]))$.
The optimal solution to this game is $\langmult[ ]^* = 1, \outer[ ]^* = \nicefrac{1}{2}, \inner[ ]^* = \nicefrac{1}{2}$.
G2DA
applied to this game with learning rates $\learnrate[\iter][\langmult] = \learnrate[\iter][\outer] = \learnrate[\iter][\inner] = 1$ for all $\iter \in \N_+$ yields:
    $\langmult[ ][\iter + 1] = 
    \project[\R_+] \left[ \langmult[ ][\iter] - 1 +  \outer[ ][\iter] + \inner[ ][t]   \right]$,
    $\outer[ ][\iter+1] = 
    \project[{[-1,1]}]\left[ -  \outer[ ][\iter] + \langmult[ ][\iter]\right]$, and
    $\inner[ ][\iter+1] = 
    \project[{[-1,1]}]\left[ \inner[ ][\iter] + 1 - \langmult[ ][\iter]\right]$. Starting at $\langmult[ ][0] = 0, \outer[ ][0] = 0, \inner[ ][0] = 0$, the algorithm proceeds as follows $\langmult[ ][1] = 0, \outer[ ][1] = 0, \inner[ ][1] = 1$; $\langmult[ ][2] = 0, \outer[ ][2] = 0, \inner[ ][2] = 1$; and so on, and thus does not converge to a Stackelberg equilibrium of the game.
\end{example}
This result is slightly discouraging, however, not entirely surprising, since one timescale GDA, i.e., GDA run with the same step sizes while descending and ascending, with fixed step sizes does not necessarily converge to stationary points of the outer player's value function in non-convex-concave min-max games with independent strategy sets \cite{daskalakis2018limit}.
Recently, it has been shown that two timescale GDA \cite{lin2020gradient}, i.e., GDA run with different step sizes while descending and ascending, converges to stationary points of the outer player's value function in these games.
Hence, one might wonder whether a two timescale variant of \Cref{alg:g2da} would converge to a Stackelberg equilibrium.
Even more discouraging, however, is that the above example converges to a solution that does not correspond to a Stackelberg equilibrium for any learning rate $\learnrate[\iter][\outer], \learnrate[\iter][\inner], \learnrate[\iter][\langmult] > 0$.
But all hope is not lost; in what follows, we rely on an oracle, inspired by the algorithms in \citeauthor{goktas2021minmax} \cite{goktas2021minmax}.

\section{GDA with Optimal KKT Multipliers}
\label{sec:oracle}
We prove the main theoretical result of this paper in this section.
For a large class of min-max Stackelberg games, we provide first-order methods with polynomial-time guarantees which, in contrast to \citeauthor{goktas2021minmax} \cite{goktas2021minmax}, who developed a nested GDA algorithm (i.e., one with a nested loop), require only a single loop of gradient evaluations.
To do so, we rely on a Lagrangian oracle, which outputs the optimal KKT multipliers of the Lagrangian associated with the inner player's payoff maximization problem.
We also provide sufficient conditions for this oracle to exist.

Since the Lagrangian is convex-concave in $\outer$ and $\inner$, if the optimal KKT multipliers $\langmult^* \in \R^\numconstrs$ were known for the problem
$
    \min_{\outer \in \outerset} \max_{\inner \in \innerset : \constr(\outer, \inner) \geq \zeros} \obj(\outer, \inner) = \min_{\langmult \geq \zeros, \outer \in \outerset} \max_{\inner \in \innerset } \lang[\outer]( \inner, \langmult)
$,
then one could plug them back into the Lagrangian to obtain a convex-concave saddle point problem given by
$
    \min_{\outer \in \outerset} \max_{\inner \in \innerset }\\ \lang[\outer]( \inner, \langmult^*)
$.
This might lead one to think that they can use GDA \cite{nedic2009gda} to solve the Lagrangian for $\outer$ and $\inner$ giving rise to \mydef{Lagrangian Gradient Descent Ascent} (\mydef{LGDA}, \Cref{alg:vgda}) 
for any convex-concave min-max Stackelberg game.

\begin{algorithm}[ht]
\caption{Lagrangian Gradient Descent Ascent (LGDA)}
\label{alg:vgda}
\textbf{Inputs:} $\langmult^*, \outerset, \innerset, \obj, \constr, \learnrate[][\langmult], \learnrate[][\outer], \learnrate[][\inner], \iters, \langmult^{(0)}, \outer^{(0)}, \inner^{(0)}$ \\ 
\textbf{Output:} $\outer^{*}, \inner^{*}$
\begin{algorithmic}[1]
\For{$\iter = 1, \hdots, \iters -1$}    
    \State Set $\outer^{(\iter +1)} = \project[\outerset] \left( \outer^{(\iter)} - \learnrate[\iter][\outer] \grad[\outer] \lang[{\outer[][\iter]}](\inner[][\iter], \langmult^*)
    \right)$

    \State Set $\inner^{(\iter +1)} = \project[{
    \innerset
    }] \left( \inner^{(\iter)} + \learnrate[\iter][\inner] \grad[\inner] \lang[{\outer[][\iter]}](\inner[][\iter], \langmult^*)
    \right)$ 
\EndFor
\State \Return $\{(\outer[][\iter], \inner[][\iter])\}_{\iter= 1}^\iters$
\end{algorithmic}
\end{algorithm}
 
\begin{example}
Consider the following 
this min-max Stackelberg game: $\min_{\outer[ ] \in [-1, 1]} \max_{\inner[ ] \in [-1, 1] : 1 - (\outer[ ] + \inner[ ]) \geq 0} \outer[ ]^2 - \inner[ ]^2 + 1 $.
The Stackelberg equilibrium of this game is $\outer[ ]^* = 0, \inner[ ]^* = 0$.
The Lagrangian
is given by $\lang[{\outer[ ]}]( \inner[ ], \langmult[ ]) = \outer[ ]^2 - \inner[ ]^2 + \langmult[ ] (1 - (\outer[ ] + \inner[ ]))$.
When we plug the optimal KKT multiplier $\langmult[ ]^* = 0$ into the Lagrangian, we obtain $\lang[{\outer[ ]}]( \inner[ ], \langmult[ ]) = \outer[ ]^2 - \inner[ ]^2$.
Thus, \Cref{alg:vgda} yields the update rules $\outer[ ][\iter +1] = \outer[ ][\iter ] - 2 \outer[ ][\iter ]$ and $\inner[ ][\iter +1] = \inner[ ][\iter ] - 2\inner[ ][\iter]$.
The sequence of iterates starting at $\outer[ ][0] = 1, \inner[ ][0] = 1$, when $\learnrate[{\outer[ ]}][\iter] = \learnrate[{\inner[ ]}][\iter] = 1$ for all $\iter \in \N_+$, thus cycles as follows $\outer[ ][1] = -1, \inner[ ][1] = -1$; $\outer[ ][2] = 1, \inner[ ][2] = 1$; and so on.
Nonetheless, the average of the iterates corresponds to the Stackelberg equilibrium.
\end{example}
Unfortunately, LGDA
does not converge in general, as once $\langmult^*$ is plugged back into the Lagrangian, the Lagrangian might become degenerate in $\inner$, i.e., the dependence of the Lagrangian on $\inner$ is lost,
in which case GDA can converge to the wrong solution:
\begin{example}
Recall \Cref{ex:gda-non-convergence}.
When we plug the optimal KKT multiplier $\langmult[ ]^* = 1$ into the Lagrangian associated with the outer player's value function, we obtain $\lang[{\outer[ ]}]( \inner[ ], \langmult[ ]) = \outer[ ]^2 + \inner[ ] +  1 - (\outer[ ] + \inner[ ]) = \outer[ ]^2 - \outer[ ] +  1$, with
$\nicefrac{\partial \lang}{\partial \outer[ ]} = 2x - 1$ and $\nicefrac{\partial \lang}{\partial \inner[ ]} = 0$.
It follows that the $\outer$ iterate converges to $\nicefrac{1}{2}$, but the $\inner$ iterate will never get updated, and hence unless $\inner$ is initialized at its Stackelberg equilibirium value, LGDA will not converge to a Stackelberg equilibirium.
\end{example}

This degeneracy issue arises when $\grad[\inner] \obj(\outer, \inner) = - \sum_{\numconstr = 1}^\numconstrs \langmult[\numconstr]^* \grad[\inner] \constr[\numconstr](\outer, \inner)$, $\forall \outer \in \outerset$, and can be side stepped if we restrict attention to min-max Stackelberg games with convex-\emph{strictly}-concave payoff functions, in which case the Lagrangian is guaranteed not to be degenerate in
$\inner$,
and convergence in average iterates is guaranteed.
The proof of the following theorem is relegated to \Cref{sec:app-gdadxs-proofs}:

\begin{theorem}
\label{thm:fisher-vgdad}
Let $\{(\outer[][\iter], \inner[][\iter])\}_{\iter= 1}^\iters$ be the sequence of iterates generated by LGDA
run on the convex-\emph{strictly}-concave min-max Stackelberg game $(\outerset, \innerset, \obj, \constr)$. Let $\left(\outer^{*}, \inner^{*}\right) \in \outerset \times \innerset$ be a Stackelberg equilibrium  of $(\outerset, \innerset, \obj, \constr)$.  Suppose that \Cref{main-assum} holds, that for all $\iter \in [\iters], \learnrate[\outer][\iter] = \learnrate[\inner][\iter] = \nicefrac{1}{\sqrt{\iters}}$. 
Let $\lipschitz[\lang] = \max_{(\outer, \inner) \in \outerset \times \innerset} \left\| \grad \lang[\outer]( \inner, \langmult^*) \right\|$, and $\avgouter[][\iters]= \frac{1}{\iters} \sum_{\iter = 1}^\iters \outer^{(\iter)}$ and $\avginner[][\iters]= \frac{1}{\iters} \sum_{\iter = 1}^\iters \inner^{(\iter)}$. We then have:
%
\begin{align}
    -\frac{\left\|\inner[][0]-\inner^{*}\right\|^{2}+\left\|\outer[][0]-\avgouter\right\|^{2} + 2\lipschitz[\lang]^{2}}{2 \sqrt{\iters}}
    \leq \obj\left(\avgouter, \avginner\right)-\obj\left(\outer^{*}, \inner^{*}\right) \notag \\ \leq  \frac{\left\|\inner[][0]-\avginner[][\iters]\right\|^{2}+\left\|\outer[][0]-\outer^*\right\|^{2} + 2\lipschitz[\lang]^{2}}{2 \sqrt{\iters}}
\end{align}
\end{theorem}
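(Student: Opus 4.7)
The plan is to reduce LGDA to standard convex-concave saddle-point GDA on the Lagrangian with the dual variable fixed at $\langmult^*$, invoke the classical Nedić--Ozdaglar primal-dual gap bound for that setting, and translate the resulting Lagrangian gap back to an objective gap via the saddle-point property of $(\outer^*, \inner^*)$ and complementary slackness.

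First, I would invoke \Cref{thm:stackelberg-equiv} together with the optimality of $\langmult^*$ to conclude that the triple $(\outer^*, \inner^*, \langmult^*)$ is a saddle point of $\lang[\outer](\inner, \langmult)$ over $\outerset \times \innerset \times \R_+^{\numconstrs}$. In particular, $(\outer^*, \inner^*)$ is a saddle point of the convex-concave function $F(\outer, \inner) := \lang[\outer](\inner, \langmult^*)$ on the \emph{independent} product $\outerset \times \innerset$, and by complementary slackness $F(\outer^*, \inner^*) = \obj(\outer^*, \inner^*)$. Here strict concavity of $\obj$ in $\inner$ is used to guarantee that $F$ is genuinely strictly concave in $\inner$ (so the Lagrangian is non-degenerate in $\inner$ of the kind illustrated in the preceding example), whence GDA's iterates make meaningful progress.

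Next, I would observe that LGDA is exactly projected GDA on $\min_\outer \max_\inner F(\outer, \inner)$ with constant stepsize $\eta = 1/\sqrt{\iters}$. Standard one-step inequalities $\|\outer^{(\iter+1)} - \outer\|^2 \leq \|\outer^{(\iter)} - \outer\|^2 - 2\eta \langle \grad[\outer] F(\outer^{(\iter)}, \inner^{(\iter)}), \outer^{(\iter)} - \outer \rangle + \eta^2 \lipschitz[\lang]^2$ (and its $\inner$-analogue), followed by a telescoping sum and Jensen's inequality applied to the convex-concavity of $F$, yield the textbook primal-dual gap: for any $(\outer, \inner) \in \outerset \times \innerset$,
\begin{align*}
F(\avgouter[][\iters], \inner) - F(\outer, \avginner[][\iters]) \;\leq\; \frac{\|\outer - \outer[][0]\|^2 + \|\inner - \inner[][0]\|^2 + 2\lipschitz[\lang]^2}{2\sqrt{\iters}}.
\end{align*}

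Finally, I would instantiate this in two ways to get the two-sided bound. For the upper bound, choosing $\outer = \outer^*, \inner = \avginner[][\iters]$ and then using the saddle inequality $F(\outer^*, \avginner) \leq F(\outer^*, \inner^*) = \obj(\outer^*, \inner^*)$ produces the stated RHS; for the lower bound, choosing $\outer = \avgouter[][\iters], \inner = \inner^*$ and using $F(\avgouter, \inner^*) \geq F(\outer^*, \inner^*) = \obj(\outer^*, \inner^*)$ yields the stated LHS. The main obstacle is the final translation from $F(\avgouter, \avginner)$ to $\obj(\avgouter, \avginner)$: the difference is $\langmult^{*T} \constr(\avgouter, \avginner)$, which is not a priori zero since the averaged iterates need not be feasible. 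Closing this gap requires combining the convex-concave structure of the constraint functions $\constr[\numconstr]$ with Jensen's inequality along the iterate trajectory---together with the non-degeneracy afforded by strict concavity, which prevents $\avginner$ from drifting away from $\inner^*$---to absorb the weighted constraint term into the $O(1/\sqrt{\iters})$ bound. This is the step I expect to require the most care, and it is why strict concavity (rather than mere concavity) is essential in this deterministic analysis.
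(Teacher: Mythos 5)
Your proposal follows essentially the same route as the paper's own (sketched) proof: fix the KKT multipliers at $\langmult^*$, use strict concavity of $\obj$ in $\inner$ to ensure the resulting Lagrangian $\lang[\outer](\inner,\langmult^*)$ is convex-strictly-concave and hence non-degenerate, and then apply the standard averaged-iterate saddle-point analysis of projected GDA from \citeauthor{nedic2009gda} \cite{nedic2009gda} (one-step non-expansiveness inequalities, telescoping, Jensen). The delicate step you flag at the end---that the gap bound you derive concerns the Lagrangian value at $(\avgouter,\avginner)$ rather than $\obj(\avgouter,\avginner)$, and the residual $\sum_{\numconstr=1}^{\numconstrs}\langmult[\numconstr]^*\constr[\numconstr](\avgouter,\avginner)$ need not vanish since the averaged iterates need not be feasible or satisfy complementary slackness---is a real concern, but the paper's proof sketch leaves it equally unaddressed, simply deferring to Theorem 3.1 of \citeauthor{nedic2009gda}, so your reconstruction is, if anything, more explicit than the original about where the remaining work lies.
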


Because of the potential degeneracy of the Lagrangian in $\inner$, we propose running the gradient ascent step for the inner player on the objective function, rather than the Lagrangian.
\Cref{alg:gdadxs}, which we dub GDALO, once again assumes access to the optimal KKT multipliers $\langmult^*$, with which it runs gradient descent on the Lagrangian for the outer player and gradient ascent on the objective function for inner player.

\begin{algorithm}[ht]
\caption{Gradient Descent Ascent with a Lagrangian 
Oracle}
\label{alg:gdadxs}
\textbf{Inputs:} $\langmult^*, \outerset, \innerset, \obj, \constr, \learnrate[][\outer], \learnrate[][\inner], \iters, \outer^{(0)}, \inner^{(0)}$ \\ 
\textbf{Output:} $\outer^{*}, \inner^{*}$
\begin{algorithmic}[1]
\For{$\iter = 1, \hdots, \iters -1$}
    \State Set $\outer^{(\iter +1)} = \project[\outerset] \left( \outer^{(\iter)} - \learnrate[\iter][\outer] \lang[{\outer[][\iter]}](\inner[][\iter], \langmult^*) 
    \right)$
    
    \State Set $\inner^{(\iter +1)} = \project[\{{ \inner \in 
    \innerset : \constr (\outer^{(\iter)}, \inner) \geq 0}\}] \left( \inner^{(\iter)} + \learnrate[\iter][\inner]  \grad[\inner] \obj (\outer^{(\iter)}, \inner^{(\iter)})   \right) $
\EndFor
\State Draw $(\widehat{\outer}, \widehat{\inner})$ uniformly at random from $\{(\outer^{(\iter)}, \inner^{(\iter)})\}_{\iter =1}^\iters$
\State \Return $(\widehat{\outer}, \widehat{\inner})$
\end{algorithmic}
\end{algorithm}

We are not able to prove that GDALO converges in average iterates to a Stackelberg equilibrium.
We do show, however, that in expectation any iterate selected uniformly at random corresponds to a Stackelberg equilibrium.%
\footnote{Note that convergence in expectation is weaker than convergence in average iterates.
Intuitively, convergence in expectation means that as the number of iterations for which the algorithm is run increases, in expectation the output of the algorithm becomes a better and better approximation of a Stackelberg equilibrium.}
In particular, the following convergence rate holds for \Cref{alg:gdadxs}.
We refer the reader to \Cref{sec:app-gdadxs-proofs} for the proofs, which we note do not follow from known results.

\begin{theorem}
\label{thm:fisher-gdaxs}
Let $(\widehat{\outer}, \widehat{\inner})$ be the output generated by \Cref{alg:gdadxs} run on the min-max Stackelberg game $(\outerset, \innerset, \obj, \constr)$. Let $\left(\outer^{*}, \inner^{*}\right) \in \outerset \times \innerset$ be a Stackelberg equilibrium  of $(\outerset, \innerset, \obj, \constr)$.  Suppose that \Cref{main-assum} holds and that for all $\iter \in [\iters], \learnrate[\outer][\iter] = \learnrate[\inner][\iter] = \nicefrac{1}{\sqrt{\iters}}$. Let $\lipschitz[\lang] = \max_{(\outer, \inner) \in \outerset \times \innerset} \left\| \grad \lang[\outer]( \inner, \langmult^*) \right\|$ and $\lipschitz[\obj] = \max_{(\outer, \inner) \in \outerset \times \innerset} \left\| \grad \obj(\outer, \inner) \right\|$.
We then have:

\noindent
\begin{align}
-\frac{\left\|\inner[][0]-\inner^{*}\right\|^{2} + \lipschitz[\obj]^{2}}{2 \sqrt{\iters}} \leq \Ex \left[\obj\left((\widehat{\outer}, \widehat{\inner})\right) \right] - \obj\left(\outer^{*}, \inner^{*}\right)  \leq \frac{\left\|\outer[][0]-\outer^{*}\right\|^{2} + \lipschitz[\lang]^{2}}{2  \sqrt{\iters}}
\end{align}


\end{theorem}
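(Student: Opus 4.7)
The proof will combine one-step analyses for the outer player's projected gradient descent on the Lagrangian with the inner player's projected gradient ascent on the objective, then invoke the Lagrangian characterisation of Stackelberg equilibria from \Cref{thm:stackelberg-equiv}.

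\textbf{Upper bound.} The outer update is projected gradient descent on $x \mapsto \mathcal{L}_x(y^{(t)}, \lambda^*)$, which by \Cref{main-assum} is convex in $x$ with gradient norm at most $L_\mathcal{L}$. Non-expansiveness of $\Pi_X$, together with convexity of $\mathcal{L}$ in $x$, yields the standard one-step inequality
$$\|x^{(t+1)}-x^*\|^2 \le \|x^{(t)}-x^*\|^2 - 2\eta \left[ \mathcal{L}_{x^{(t)}}(y^{(t)},\lambda^*) - \mathcal{L}_{x^*}(y^{(t)},\lambda^*) \right] + \eta^2 L_\mathcal{L}^2,$$
which telescopes (with $\eta = 1/\sqrt{T}$) to bound $\tfrac{1}{T}\sum_t \left[\mathcal{L}_{x^{(t)}}(y^{(t)},\lambda^*) - \mathcal{L}_{x^*}(y^{(t)},\lambda^*)\right]$ by $\tfrac{\|x^{(0)}-x^*\|^2 + L_\mathcal{L}^2}{2\sqrt{T}}$. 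I convert this into a bound on $f$ using the KKT/strong-duality content of \Cref{thm:stackelberg-equiv}: since $y^*$ is an unconstrained maximiser of $\mathcal{L}_{x^*}(\cdot, \lambda^*)$ over $Y$, we have $\mathcal{L}_{x^*}(y^{(t)}, \lambda^*) \le f(x^*, y^*)$ for every $t$; and because $\lambda^* \ge 0$ while feasibility $g(x^{(t)}, y^{(t)}) \ge 0$ holds up to an index shift supplied by the $y$-projection, $f(x^{(t)}, y^{(t)}) \le \mathcal{L}_{x^{(t)}}(y^{(t)}, \lambda^*)$. Chaining these with $\Ex[f(\widehat{x}, \widehat{y})] = \tfrac{1}{T}\sum_t f(x^{(t)}, y^{(t)})$ then gives the stated upper bound.

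\textbf{Lower bound.} Symmetrically, the inner update is projected gradient ascent on $y \mapsto f(x^{(t)}, y)$ (not $\mathcal{L}$) onto $\{y \in Y : g(x^{(t)}, y) \ge 0\}$. For any comparator $z$ in that projection set, non-expansiveness gives
$$\|y^{(t+1)}-z\|^2 \le \|y^{(t)}-z\|^2 + 2\eta \langle \nabla_y f(x^{(t)}, y^{(t)}), y^{(t)}-z\rangle + \eta^2 L_f^2.$$
With $z = y^*$, concavity of $f$ in $y$ converts the inner product into $f(x^{(t)}, y^*) - f(x^{(t)}, y^{(t)})$, and telescoping produces the $\|y^{(0)}-y^*\|^2$ term. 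It remains to show $\tfrac{1}{T}\sum_t f(x^{(t)}, y^*) \ge f(x^*, y^*)$: this combines convexity of $f$ in $x$ with the first-order optimality of $x^*$ for $\min V(x)$ on $X$ via the envelope expression $\nabla V(x^*) = \nabla_x \mathcal{L}_{x^*}(y^*, \lambda^*)$, convexity of $\lambda^* \cdot g$ in $x$, and the complementary-slackness identity $\lambda^* \cdot g(x^*, y^*) = 0$.

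\textbf{Main obstacle.} The delicate point is that the $y$-projection set depends on $x^{(t)}$ and need not contain $y^*$; dually, the upper bound wants $g(x^{(t)}, y^{(t)}) \ge 0$, whereas the projection only guarantees $g(x^{(t-1)}, y^{(t)}) \ge 0$. I plan to handle both via an index-shifted analysis, absorbing the one-step drift $\|x^{(t)}-x^{(t-1)}\| = O(\eta L_\mathcal{L}) = O(1/\sqrt{T})$ into the advertised $O(1/\sqrt{T})$ rate using Lipschitzness of $f$, $g$, and $\mathcal{L}$. Ensuring that the resulting correction terms aggregate to $O(\sqrt{T})$ rather than $O(T)$ is the subtlest bookkeeping in the proof.
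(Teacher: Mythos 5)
Your upper bound is, in substance, the paper's argument: the same one-step descent inequality for the projected gradient step on $x\mapsto\mathcal{L}_x(y^{(t)},\lambda^*)$, telescoped with $\eta=\nicefrac{1}{\sqrt{T}}$, then converted to a bound on $f$ via $f(x^{(t)},y^{(t)})\le\mathcal{L}_{x^{(t)}}(y^{(t)},\lambda^*)$ and the fact that the Lagrangian regret benchmark equals the Stackelberg value $f(x^*,y^*)$ (the paper reaches this through the max--min inequality and \Cref{thm:stackelberg-equiv}; your pointwise use of $\max_{y\in Y}\mathcal{L}_{x^*}(y,\lambda^*)=f(x^*,y^*)$ is equivalent). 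You are also right---and more careful than the paper, which silently asserts $g(x^{(t)},y^{(t)})\ge 0$---that the projection only guarantees $g(x^{(t-1)},y^{(t)})\ge 0$; your Lipschitz/index-shift repair is a sensible patch for that.

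The lower bound, however, has a genuine gap. You reduce it to the claim $\frac{1}{T}\sum_t f(x^{(t)},y^*)\ge f(x^*,y^*)$, which is false in general: $x^*$ minimizes the value function $V(x)=\max_{y:\,g(x,y)\ge 0}f(x,y)$, not the section $x\mapsto f(x,y^*)$, so $f(\bar x,y^*)$ can fall strictly below $f(x^*,y^*)$. In the paper's own \Cref{ex:gda-non-convergence} ($f(x,y)=x^2+y+1$, $x^*=y^*=\nicefrac{1}{2}$) one has $f(0,y^*)=\nicefrac{3}{2}<\nicefrac{7}{4}=f(x^*,y^*)$. The tools you invoke (first-order optimality of $x^*$ via the envelope expression, convexity of $\lambda^*\cdot g(\cdot,y^*)$, complementary slackness) prove exactly $\mathcal{L}_x(y^*,\lambda^*)\ge f(x^*,y^*)$ for all $x\in X$, i.e.\ $f(x,y^*)\ge f(x^*,y^*)-\lambda^*\cdot g(x,y^*)$; because the inner player ascends on $f$ rather than on $\mathcal{L}$, the telescoped sum contains $f(x^{(t)},y^*)$ and the residual $\lambda^*\cdot g(x^{(t)},y^*)$, which is nonnegative whenever $y^*$ is feasible at $x^{(t)}$ and does not shrink with $T$. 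The paper's fix is to change comparator: apply the ascent inequality with $y=\hat y(\bar x)\in\arg\max_{y:\,g(\bar x,y)\ge 0}f(\bar x,y)$, use convexity of $f(\cdot,y)$ to pass from the iterate average to $\bar x$, and conclude via $\max_{y:\,g(\bar x,y)\ge 0}f(\bar x,y)=V(\bar x)\ge\min_{x\in X}V(x)=f(x^*,y^*)$. You should adopt that comparator---noting that your feasibility concern then reappears for $\hat y(\bar x)$, so the index-shift bookkeeping is still needed there.
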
 

\if 0
\begin{theorem}\label{thm:fisher-gda}
Consider a convex-concave min-max Stackelberg game, $(\outerset, \innerset, \obj, \constr)$, where $\outerset$ and $\innerset$ are convex, and suppose \Cref{main-assum} holds. If \Cref{alg:gdadxs} is run on $(\outerset, \innerset, \obj, \constr)$ with learning rates $\learnrate[\outer][\iter] = \frac{\max_{\outer \in \outerset} \left\| \outer \right\| }{\lipschitz[\lang] \sqrt{2 \iters}}$ $\learnrate[\inner][\iter] = \frac{\max_{\inner \in \innerset} \left\| \outer \right\| }{\lipschitz[\obj] \sqrt{2 \iters}} $ where $\lipschitz[\lang] = \max_{(\outer, \inner) \in \outerset \times \innerset} \lang[\outer]( \inner, \langmult^*)$ and $\lipschitz[\obj] = \max_{(\outer, \inner) \in \outerset \times \innerset} \obj(\outer, \inner)$. $\avgouter[][\iters]= \frac{1}{\iters} \sum_{\iter = 1}^\iters \outer^{(\iter)}$ and $\avginner[][\iters]= \frac{1}{\iters} \sum_{\iter = 1}^\iters \inner^{(\iter)}$, then the following convergence bounds holds:
\begin{align}
    \max_{\inner \in \innerset} \min_{\langmult \in \R_+^\numconstrs} \lang[{\avgouter[][\iters]}]( \inner, \langmult) -  \min_{\outer \in \outerset}  \min_{\langmult \in \R_+^\numconstrs} \lang[\outer]( \avginner[][\iter], \langmult) \leq \left( \max_{\inner \in \innerset} \left\| \inner \right\|  \lipschitz[\obj] +  \max_{\outer \in \outerset}  \left\| \outer \right\| \lipschitz[\lang] \right) \sqrt{\frac{2}{\iters}}
\end{align}
\end{theorem}
\begin{proof}
Our proof relies on \Cref{thm:bounded-regret}. Consider the following OCO problems $\{(\outerset^{(\iter)}, \loss[\outerset][\iter])\}_{\iter = 1}^\iters$, $\{(\innerset^{(\iter)}, \loss[\innerset][\iter])\}_{\iter = 1}^\iters$ where for all $\iter \in [\iters]$, $\outerset^{(\iter)} = \outerset$, $\innerset^{(\iter)} = \innerset \cap \{\inner : \constr(\outer^{\iter-1}, \inner) \geq \zeros \}$, $\loss[\outerset][\iter] = \lang[\outer]( \inner[][\iter - 1], \langmult^*)$, and  $\loss[\innerset][\iter] = - \obj(\outer[][\iter -1], \inner)$. Notice that solving these OCO problems simultaneously using projected online gradient descent is equivalent to \Cref{alg:gdadxs}. By \Cref{thm:bounded-regret}, we have:
\begin{align}\label{eq:regret-bound-gdaxs}
\left\{\begin{array}{c}
    \sum_{\iter = 1}^\iters \frac{1}{\iters} \lang(\outer[][\iter], \inner[][\iter], \langmult^*) - \min_{\outer \in \outerset} \sum_{\iter = 1}^\iters \frac{1}{\iters} \lang[\outer]( \inner[][\iter], \langmult^*) \leq \max_{\outer \in \outerset} \left\| \outer \right\| \lipschitz[\lang] \sqrt{\frac{2}{\iters}}
    \\
    \\
    \max_{\inner \in \innerset: \constr(\outer[][\iter] , \inner) \geq \zeros} \sum_{\iter = 1}^\iters \frac{1}{\iters} \obj(\outer[][\iter], \inner) - \sum_{\iter = 1}^\iters \frac{1}{\iters} \obj(\outer[][\iter], \inner[][\iter]) \leq \max_{\inner \in \innerset} \left\| \inner \right\| \lipschitz[\obj] \sqrt{\frac{2}{\iters}}
\end{array}\right.
\end{align}
Note that for all $\iter \in [\iters]$, we have $\lang(\outer[][\iter], \inner[][\iter], \langmult^*) = \obj(\outer[][\iter], \inner[][\iter]) + \sum_{\numconstr = 1}^\numconstrs  \langmult[\numconstr]^* \constr[\numconstr](\outer[][\iter], \inner[][\iter])\geq \obj(\outer[][\iter], \inner[][\iter])$, since $\langmult^* \in \R_+^\numconstrs$ and for all $\iter \in [\iters]$, $\constr(\outer[][\iter], \inner[][\iter]) \geq \zeros$. Hence, we can re-express \Cref{eq:regret-bound-gdaxs} as:
\begin{align}
    \left\{\begin{array}{c}
    \sum_{\iter = 1}^\iters \frac{1}{\iters} \lang(\outer[][\iter], \inner[][\iter], \langmult^*) - \min_{\outer \in \outerset} \sum_{\iter = 1}^\iters \frac{1}{\iters} \lang[\outer]( \inner[][\iter], \langmult^*) \leq \max_{\outer \in \outerset} \left\| \outer \right\| \lipschitz[\lang] \sqrt{\frac{2}{\iters}}
    \\
    \\
    \max_{\inner \in \innerset: \constr(\outer[][\iter] , \inner) \geq \zeros} \sum_{\iter = 1}^\iters \frac{1}{\iters} \obj(\outer[][\iter], \inner) - \sum_{\iter = 1}^\iters \frac{1}{\iters} \lang(\outer[][\iter], \inner[][\iter], \langmult^*) \leq \max_{\inner \in \innerset} \left\| \inner \right\| \lipschitz[\obj] \sqrt{\frac{2}{\iters}}
\end{array}\right.
\end{align}
Summing up the inequalities above, we obtain:
\begin{align}
    \max_{\inner \in \innerset: \constr(\outer[][\iter] , \inner) \geq \zeros} \sum_{\iter = 1}^\iters \frac{1}{\iters} \obj(\outer[][\iter], \inner) -  \min_{\outer \in \outerset} \sum_{\iter = 1}^\iters \frac{1}{\iters} \lang[\outer]( \inner[][\iter], \langmult^*) \leq \max_{\inner \in \innerset} \left\| \inner \right\| \lipschitz[\obj] \sqrt{\frac{2}{\iters}} +  \max_{\outer \in \outerset} \left\| \outer \right\| \lipschitz[\lang] \sqrt{\frac{2}{\iters}}
\end{align}
Let $\avgouter[][\iters]= \frac{1}{\iters} \sum_{\iter = 1}^\iters \outer^{(\iter)}$  and $\avginner[][\iters]= \frac{1}{\iters} \sum_{\iter = 1}^\iters \inner^{(\iter)}$
, by the convex-concavity of $\obj$ and $\lang$ in $\outer$ and $\inner$, we then have:
\begin{align}
    \max_{\inner \in \innerset: \constr(\outer , \inner) \geq \zeros}  \obj(\avgouter[][\iter], \inner) -  \min_{\outer \in \outerset} \lang[\outer]( \avginner[][\iter], \langmult^*) \leq \max_{\inner \in \innerset} \left\| \inner \right\| \lipschitz[\obj] \sqrt{\frac{2}{\iters}} +  \max_{\outer \in \outerset} \left\| \outer \right\| \lipschitz[\lang] \sqrt{\frac{2}{\iters}}\\
    \max_{\inner \in \innerset} \min_{\langmult \in \R_+^\numconstrs}  \lang[{\avgouter[][\iters]}](\inner, \langmult) -  \min_{\outer \in \outerset} \lang[\outer]( \avginner[][\iter], \langmult^*) \leq \left( \max_{\inner \in \innerset} \left\| \inner \right\|  \lipschitz[\obj] +  \max_{\outer \in \outerset}  \left\| \outer \right\| \lipschitz[\lang] \right) \sqrt{\frac{2}{\iters}}
\end{align}
Using the above we obtain first the following bound:
\begin{align}
    \max_{\inner \in \innerset} \min_{\langmult \in \R_+^\numconstrs} \lang[{\avgouter[][\iters]}]( \inner, \langmult) -  \min_{\outer \in \outerset} \max_{\inner \in \innerset} \lang[\outer]( \inner, \langmult^*) \leq \left( \max_{\inner \in \innerset} \left\| \inner \right\|  \lipschitz[\obj] +  \max_{\outer \in \outerset}  \left\| \outer \right\| \lipschitz[\lang] \right) \sqrt{\frac{2}{\iters}}\label{eq:convex-regret-bound}\\
    \max_{\inner \in \innerset} \min_{\langmult \in \R_+^\numconstrs} \lang[{\avgouter[][\iters]}]( \inner, \langmult) -  \min_{\outer \in \outerset} \max_{\inner \in \innerset} \min_{\langmult \in \R_+^\numconstrs} \lang[\outer]( \inner, \langmult) \leq \left( \max_{\inner \in \innerset} \left\| \inner \right\|  \lipschitz[\obj] +  \max_{\outer \in \outerset}  \left\| \outer \right\| \lipschitz[\lang] \right) \sqrt{\frac{2}{\iters}}\\
    \max_{\inner \in \innerset : \constr(\avgouter[][\iters] , \inner) \geq \zeros} \obj(\avgouter[][\iters], \inner) -  \min_{\outer \in \outerset} \max_{\inner \in \innerset : \constr(\outer, \inner) \geq \zeros}  \obj(\outer, \inner) \leq \left( \max_{\inner \in \innerset} \left\| \inner \right\|  \lipschitz[\obj] +  \max_{\outer \in \outerset}  \left\| \outer \right\| \lipschitz[\lang] \right) \sqrt{\frac{2}{\iters}}\\
    \obj(\avgouter[][\iters], \avginner[][\iters]) -  \min_{\outer \in \outerset} \max_{\inner \in \innerset : \constr(\outer, \inner) \geq \zeros}  \obj(\outer, \inner) \leq \left( \max_{\inner \in \innerset} \left\| \inner \right\|  \lipschitz[\obj] +  \max_{\outer \in \outerset}  \left\| \outer \right\| \lipschitz[\lang] \right) \sqrt{\frac{2}{\iters}}\\
    \obj(\avgouter[][\iters], \avginner[][\iters])  \leq \min_{\outer \in \outerset} \max_{\inner \in \innerset : \constr(\outer, \inner) \geq \zeros}  \obj(\outer, \inner) + \left( \max_{\inner \in \innerset} \left\| \inner \right\|  \lipschitz[\obj] +  \max_{\outer \in \outerset}  \left\| \outer \right\| \lipschitz[\lang] \right) \sqrt{\frac{2}{\iters}}
\end{align}
\noindent
where the penultimate line was obtained by \Cref{thm:stackelberg-equiv} and the last line by the properties of the maximum function.
Similarly, going back to \Cref{eq:convex-regret-bound} we can also obtain:
\begin{align}
    \max_{\inner \in \innerset} \min_{\langmult \in \R_+^\numconstrs} \lang[{\avgouter[][\iters]}]( \inner, \langmult) -  \min_{\outer \in \outerset}  \lang[\outer]( \avginner[][\iter], \langmult^*) \leq \left( \max_{\inner \in \innerset} \left\| \inner \right\|  \lipschitz[\obj] +  \max_{\outer \in \outerset}  \left\| \outer \right\| \lipschitz[\lang] \right) \sqrt{\frac{2}{\iters}}\\
    \max_{\inner \in \innerset} \min_{\langmult \in \R_+^\numconstrs} \lang[{\avgouter[][\iters]}]( \inner, \langmult) - \min_{\outer \in \outerset}  \max_{\inner \in \innerset} \lang[\outer]( \inner, \langmult^*) \leq \left( \max_{\inner \in \innerset} \left\| \inner \right\|  \lipschitz[\obj] +  \max_{\outer \in \outerset}  \left\| \outer \right\| \lipschitz[\lang] \right) \sqrt{\frac{2}{\iters}}\\
    \max_{\inner \in \innerset : \constr(\avgouter[][\iters], \inner) \geq \zeros} \obj(\avgouter[][\iters], \inner) -   \min_{\outer \in \outerset} \max_{\inner \in \innerset : \constr(\outer, \inner) \geq \zeros} \obj(\avgouter[][\iters], \inner)  \leq \left( \max_{\inner \in \innerset} \left\| \inner \right\|  \lipschitz[\obj] +  \max_{\outer \in \outerset}  \left\| \outer \right\| \lipschitz[\lang] \right) \sqrt{\frac{2}{\iters}}\\
    \max_{\inner \in \innerset : \constr(\avgouter[][\iters], \inner) \geq \zeros} \obj(\avgouter[][\iters], \inner) -   \sum_{\iter = 1}^\iters \frac{1}{\iters} \obj(\outer[][\iter], \inner[][\iter]) \leq \left( \max_{\inner \in \innerset} \left\| \inner \right\|  \lipschitz[\obj] +  \max_{\outer \in \outerset}  \left\| \outer \right\| \lipschitz[\lang] \right) \sqrt{\frac{2}{\iters}}
\end{align}
\end{proof}
\fi


We conclude this section by providing an explicit closed-form solution for the optimal KKT multipliers for a large class of min-max Stackelberg games.
The existence of a Lagrangian oracle is guaranteed in these games.

\begin{theorem}
\label{thm:langrangian-oracle-existence}
    Consider a min-max Stackelberg game of the following form:
    \begin{align}
        \min_{\outer \in \outerset} \max_{\Y \in \innerset^\outerdim: \forall i, \constr[i](\y_i,\outer) \leq c_i } \obj_1(\outer) + \sum_{i = 1}^\outerdim a_i \log(\obj_2(\outer, \y_i))  + \sum_{i = 1}^\outerdim b_i \log(\obj_3(\y_i))
    \end{align}
    where $\obj_1: \outerset \to \R$, $\obj_2: \innerset \to \R$, $\obj_3: \innerset \to \R$, $\constr: \outerset \times \innerset \to \R$ and $\outerset \subset \R^\outerdim, \innerset \subset \R^{\innerdim}$ are compact-convex.
    Suppose that 
    1.~ $\obj_2$, $\obj_3$, $\constr[1], \hdots, \constr[\outerdim]$ are concave $\inner$, for all $\outer \in \outerset$,
    2.~ $\obj_2$, $\obj_3$ are homogeneous in $\inner$, for all $\outer \in \outerset$, i.e., $\forall k \in \R f_2(\outer, k \inner) = k f_2(\outer, \inner), f_3(k \inner) = k f_3(\inner)$,
    and continuous.
    Then, the optimal KKT multipliers $\langmult^* \in \R^\outerdim$ are $\langmult[i]^* = \frac{a_i + b_i }{c_i}$, for all $i \in [\outerdim]$.
\end{theorem}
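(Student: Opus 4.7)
The plan is to derive the formula by writing down the KKT conditions for the inner player's maximization problem at a fixed $\outer \in \outerset$, and then exploiting the homogeneity hypotheses via Euler's identity. Since $\obj_2,\obj_3,\constr[1],\ldots,\constr[\outerdim]$ are concave in $\inner$ and Slater's condition holds (\Cref{main-assum}), the inner maximization is a concave program in $\Y = (\y_1,\ldots,\y_\outerdim)$, so the KKT conditions are necessary and sufficient for optimality. Writing the problem in the form of the paper, with constraints $c_i - \constr[i](\y_i,\outer) \geq 0$, the Lagrangian decouples across $i$:
\begin{equation*}
\lang[\outer](\Y,\langmult) = \obj_1(\outer) + \sum_{i=1}^\outerdim\Bigl[a_i \log\obj_2(\outer,\y_i) + b_i \log\obj_3(\y_i) + \langmult[i]\bigl(c_i - \constr[i](\y_i,\outer)\bigr)\Bigr].
\end{equation*}

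First I would write the stationarity condition $\grad[\y_i]\lang[\outer] = \zeros$ at an optimum $(\Y^*,\langmult^*)$, namely
\begin{equation*}
\frac{a_i}{\obj_2(\outer,\y_i^*)}\grad[\y_i]\obj_2(\outer,\y_i^*) + \frac{b_i}{\obj_3(\y_i^*)}\grad[\y_i]\obj_3(\y_i^*) = \langmult[i]^*\,\grad[\y_i]\constr[i](\y_i^*,\outer).
\end{equation*}
Next I would take the inner product of both sides with $\y_i^*$. The homogeneity of degree one of $\obj_2(\outer,\cdot)$ and $\obj_3(\cdot)$ in $\inner$ gives, via Euler's identity, $\langle \grad[\y_i]\obj_2(\outer,\y_i^*),\y_i^*\rangle = \obj_2(\outer,\y_i^*)$ and $\langle \grad[\y_i]\obj_3(\y_i^*),\y_i^*\rangle = \obj_3(\y_i^*)$; applying the same identity to $\constr[i]$ (treating $\constr[i]$ as homogeneous of degree one in $\y_i$, as in the Fisher-market instances of interest) gives $\langle \grad[\y_i]\constr[i](\y_i^*,\outer),\y_i^*\rangle = \constr[i](\y_i^*,\outer)$. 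Substituting yields
\begin{equation*}
a_i + b_i = \langmult[i]^*\,\constr[i](\y_i^*,\outer).
\end{equation*}

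Finally I would invoke complementary slackness: since $a_i+b_i>0$, the multiplier $\langmult[i]^*$ must be strictly positive (otherwise the left-hand side would vanish), so the constraint binds, $\constr[i](\y_i^*,\outer) = c_i$, and one obtains $\langmult[i]^* = (a_i+b_i)/c_i$. Since this closed form depends on neither $\outer$ nor $\y_i^*$, it is the optimal KKT multiplier for every $\outer \in \outerset$, which is exactly what is required for the Lagrangian oracle used by \Cref{alg:gdadxs}. The main subtlety I expect to encounter is justifying constraint activity (and thus $\langmult[i]^*>0$): one needs to rule out the degenerate case where $a_i+b_i=0$ or $\obj_2,\obj_3$ vanish, and to confirm that the homogeneity assumption on $\constr[i]$ used in the Euler step is indeed part of the hypothesis (in the intended Fisher-market applications the budget constraints are linear, hence trivially homogeneous).
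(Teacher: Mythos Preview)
Your proposal is correct and follows essentially the same route as the paper: write the KKT stationarity condition for the inner maximization, take the inner product with the optimizer $\y_i^*$, apply Euler's identity for homogeneous functions to collapse each term to $a_i+b_i=\langmult[i]^*\,\constr[i](\y_i^*,\outer)$, and then use complementary slackness to replace $\constr[i](\y_i^*,\outer)$ by $c_i$. The paper likewise applies Euler's identity to $\constr[i]$ without listing homogeneity of $\constr[i]$ among the hypotheses, so the subtlety you flag is present in the original argument as well.
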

We remark that since the optimal KKT multipliers of the min-max Stackelberg games of the form of \Cref{thm:langrangian-oracle-existence} are given in closed form, such games can be converted into min-max games with independent strategy sets.
That said, \Cref{alg:gdadxs} is still necessary, because as we have shown, the Lagrangian can become degenerate in which case LGDA can converge to the wrong solution.

\section{Application to
Fisher Markets}
\label{sec:fisher}

The Fisher market model, attributed to Irving Fisher \cite{brainard2000compute}, has received a great deal of attention recently, in particular by computer scientists, as its applications to fair division and mechanism design have proven useful for the design of automated markets in many online marketplaces.
In this section, we use our algorithms to compute competitive equilibria in Fisher markets, which have been shown to be instances of min-max Stackelberg games \cite{goktas2021minmax}.

\if 0
so that their Stackelberg equilibria coincide with the competitive equilibria.
Applied to the computation of competitive equilibria in Fisher markets, our GDA algorithms correspond to myopic best-response dynamics.
In a related dynamic price-adjustment process called t\^atonnement \cite{walras}, which also converges to competitive equilibria, sellers adjust their prices incrementally, while buyers respond optimally to the sellers' price adjustments.
Our dynamics give rise to a novel t\^atonnement process in which both buyers and sellers exhibit bounded rationality, with sellers adjusting prices incrementally as usual, and buyers adjusting their demands incrementally as well.
\fi


A \mydef{Fisher market} consists of $\numbuyers$ buyers and $\numgoods$ divisible goods \cite{brainard2000compute}.
Each buyer $\buyer \in \buyers$ has a budget $\budget[\buyer] \in \mathbb{R}_{++}$ 
and a utility function $\util[\buyer]: \mathbb{R}_{+}^{\numgoods} \to \mathbb{R}$.
As is standard in the literature, we assume that there is one divisible unit of each good available in the market \cite{AGT-book}.
An instance of a Fisher market is given by a tuple $(\numbuyers, \numgoods, \util, \budget)$, where $\util = \left\{\util[1], \hdots, \util[\numbuyers] \right\}$ is a set of utility functions, one per buyer, and $\budget \in \R_{+}^{\numbuyers}$ is the vector of buyer budgets.
We abbreviate as $(\util, \budget)$ when $\numbuyers$ and $\numgoods$ are clear from context.

An \mydef{allocation} $\allocation = \left(\allocation[1], \hdots, \allocation[\numbuyers] \right)^T \in \R_+^{\numbuyers \times \numgoods}$ is a map from goods to buyers, represented as a matrix, s.t.\ $\allocation[\buyer][\good] \ge 0$ denotes the amount of good $\good \in \goods$ allocated to buyer $\buyer \in \buyers$.
Goods are assigned \mydef{prices} $\price = \left(\price[1], \hdots, \price[\numgoods] \right)^T \in \mathbb{R}_+^{\numgoods}$.
A tuple $(\price^*, \allocation^*)$ is said to be a \mydef{competitive (or Walrasian) equilibrium} of Fisher market $(\util, \budget)$ if 1.~buyers are utility maximizing, constrained by their budget, i.e., $\forall \buyer \in \buyers, \allocation[\buyer]^* \in \argmax_{\allocation[ ] : \allocation[ ] \cdot \price^* \leq \budget[\buyer]} \util[\buyer](\allocation[ ])$;
and 2.~the market clears, i.e., $\forall \good \in \goods,  \price[\good]^* > 0 \Rightarrow \sum_{\buyer \in \buyers} \allocation[\buyer][\good]^* = 1$ and $\price[\good]^* = 0 \Rightarrow\sum_{\buyer \in \buyers} \allocation[\buyer][\good]^* \leq 1$.

\citeauthor{goktas2021minmax} \cite{goktas2021minmax} observe that any competitive equilibrium $(\price^*, \allocation^*)$ of a Fisher market $(\util, \budget)$ corresponds to a Stackelberg equilibrium of the following min-max Stackelberg game:
\begin{align}
    \min_{\price \in \R_+^\numgoods} \max_{\allocation \in \R^{\numbuyers \times \numgoods}_+ :  \allocation \price \leq \budget} \sum_{\good \in \goods} \price[\good] + \sum_{\buyer \in \buyers}  \budget[\buyer] \log \left(  \util[\buyer](\allocation[\buyer]) \right) \enspace .
    \label{fisher-program}
\end{align}
\noindent
By \Cref{thm:langrangian-oracle-existence}, we can obtain a Lagrangian solution oracle for \Cref{fisher-program}, given by the following corollary. We note that as for all buyers $\buyer \in \buyers$, $\budget[\buyer] > 0$, Slater's condition is satisfied.
\begin{corollary}
\label{corr:fisher-lagrangian-oracle}
Consider the min-max Stackelberg game described by \Cref{fisher-program}. The optimal KKT multipliers are given by  
$\langmult^* = \ones[\numbuyers]$.
\end{corollary}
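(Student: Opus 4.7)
The plan is to obtain this corollary as a direct instantiation of \Cref{thm:langrangian-oracle-existence}, by showing that the Fisher market program \Cref{fisher-program} fits the templated form appearing in the hypothesis of that theorem.

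First I would set up the identification of variables. I would take the outer variable to be the price vector $\outer = \price \in \R_+^\numgoods$, and regard the allocation matrix row by row: for each buyer $\buyer \in \buyers$, set $\y_\buyer = \allocation[\buyer]$, so that $\Y = \allocation$ lies in $(\R_+^\numgoods)^\numbuyers$, matching $\innerset^\outerdim$ with $\outerdim$ playing the role of $\numbuyers$. The objective of \Cref{fisher-program} then decomposes as $\obj_1(\price) = \sum_{\good \in \goods} \price[\good]$, and $\sum_{\buyer \in \buyers} \budget[\buyer] \log(\util[\buyer](\allocation[\buyer]))$, which fits the template by choosing $a_\buyer = \budget[\buyer]$, $\obj_2(\price, \y_\buyer) = \util[\buyer](\y_\buyer)$, $b_\buyer = 0$, and $\obj_3 \equiv 1$ (so the $\log \obj_3$ terms vanish and impose no constraint on $\obj_3$'s homogeneity beyond triviality). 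The budget constraint $\allocation[\buyer] \cdot \price \leq \budget[\buyer]$ yields $\constr[\buyer](\y_\buyer, \price) = \y_\buyer \cdot \price$ and $c_\buyer = \budget[\buyer]$.

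Next I would verify that the hypotheses of \Cref{thm:langrangian-oracle-existence} hold under the standard Fisher market assumption that utilities are concave and homogeneous of degree one in $\y_\buyer$ (a class that includes, e.g., linear, Cobb-Douglas, and CES utilities). Under this assumption, $\obj_2(\price, \y_\buyer) = \util[\buyer](\y_\buyer)$ is concave and homogeneous in $\y_\buyer$, and $\constr[\buyer](\y_\buyer, \price) = \y_\buyer \cdot \price$ is linear and therefore concave in $\y_\buyer$ for every fixed $\price$. Slater's condition holds by the remark preceding the corollary, since $\allocation[\buyer] = \zeros$ gives $\allocation[\buyer] \cdot \price = 0 < \budget[\buyer]$ strictly. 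Continuity of all relevant maps is immediate.

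Finally I would apply the closed-form conclusion of \Cref{thm:langrangian-oracle-existence}: for each $\buyer \in [\numbuyers]$,
\[
    \langmult[\buyer]^* \;=\; \frac{a_\buyer + b_\buyer}{c_\buyer} \;=\; \frac{\budget[\buyer] + 0}{\budget[\buyer]} \;=\; 1,
\]
which yields $\langmult^* = \ones[\numbuyers]$, as claimed. The main, and essentially only, subtle point in this reduction is the mapping between the theorem's index set of ``inner sub-variables'' and the buyers of the Fisher market, together with the observation that the $\obj_3$-term is vacuous here so we need not specify an $\obj_3$ satisfying the homogeneity condition nontrivially; beyond that, the proof is a mechanical unpacking of definitions.
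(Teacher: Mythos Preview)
Your proposal is correct and is exactly the approach the paper takes: the paper states the corollary as an immediate consequence of \Cref{thm:langrangian-oracle-existence} applied to \Cref{fisher-program}, and your identification $a_\buyer=\budget[\buyer]$, $b_\buyer=0$, $c_\buyer=\budget[\buyer]$ (with the $\obj_3$-term vacuous) is the intended instantiation, yielding $\langmult[\buyer]^*=(\budget[\buyer]+0)/\budget[\buyer]=1$.
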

Let $\lang$ be the Lagrangian of the outer player's value function in \Cref{fisher-program}, i.e., $\lang[\price](\allocation, \langmult) = \sum_{\good \in \goods} \price[\good] + \sum_{\buyer \in \buyers}  \budget[\buyer] \log \left(  \util[\buyer](\allocation[\buyer]) \right) + \sum_{\buyer \in \buyers} \langmult[\buyer] \left( \budget[\buyer] - \allocation[\buyer] \cdot \price \right)$.
Using \Cref{corr:fisher-lagrangian-oracle}, we can define 
\mydef{myopic best-response dynamics} (\Cref{alg:myopic-br}; MBRD) \cite{cournot1897researches, monderer1996potential} in Fisher markets as GDALO run on \Cref{fisher-program}, by noting that $\grad[\price] \lang[\price](\allocation, \ones[\numbuyers]) = \\ \ones[\numgoods] - \sum_{\buyer \in \buyers} \allocation[\buyer]$
(\citeauthor{goktas2022cch}  \cite{goktas2022cch}, Theorem 3).

\begin{algorithm}[ht]
\caption{Myopic Best-Response Dynamics (MBRD)}
\label{alg:myopic-br}
\textbf{Inputs:} $\util, \budget, \learnrate[][\price], \learnrate[][\allocation], \iters, \allocation^{(0)}, \price^{(0)}$ \\ 
\textbf{Output:} $\outer^{*}, \inner^{*}$
\begin{algorithmic}[1]
\For{$\iter = 1, \hdots, \iters -1$}
    \State Set $\price^{(\iter +1)} = \project[\R_+^\numgoods] \left( \price^{(\iter +1)} + \sum_{\buyer \in \buyers} \allocation[\buyer]^{(t)} - \ones[\numgoods] \right)$
    \State For all $\buyer \in \buyers$, set \\ 
    $\allocation[\buyer]^{(\iter +1)} = \project[\{{ \allocation[ ] \in 
    \R^\numgoods_+ : \allocation[ ] \cdot \price^{(\iter-1)} \leq \budget[\buyer]}\}] \left( \allocation[\buyer]^{(\iter)} + \learnrate[\iter][\inner]  \frac{\budget[\buyer]}{\util[\buyer]\left(\allocation[\buyer]^{(\iter)}\right)} \grad[{\allocation[\buyer]}] \util[\buyer]\left(\allocation[\buyer]^{(\iter)}\right) \right)$
\EndFor
\State Draw $(\widehat{\price}, \widehat{\allocation})$ uniformly at random from $\{(\price^{(\iter)}, \allocation^{(\iter)})\}_{\iter = 1}^\iters$
\State \Return $(\widehat{\price}, \widehat{\allocation})$ 
\end{algorithmic}
\end{algorithm}

In words, under myopic best-response dynamics, at each time step the (fictional Walrasian) auctioneer takes a gradient descent step, and then all the buyers take a gradient ascent step to maximize their utility.
We note that myopic best-response dynamics can also be interpreted as a t\^atonnement process run with boundedly rational buyers who take a step in the direction of their optimal bundle, but do not actually compute their optimal bundle at each time step.
We thus have the following
corollary of \Cref{thm:fisher-gdaxs}:

\begin{corollary}
Let $(\util, \budget)$ be a Fisher market with equilibrium price vector $\price^{*}$, where $\util$ is a set of continuous, concave, homogeneous, and continuously differentiable utility functions.
Running myopic best-response dynamics (\Cref{alg:myopic-br}) on the Fisher market $(\util, \budget)$ yields an output which is in expectation an $\varepsilon$-competitive equilibrium with $\varepsilon$-utility maximizing allocations in $O(\nicefrac{1}{\varepsilon^2})$ iterations.%
\footnote{We note that one can ensure that the derivatives of $\sum_{\buyer \in \buyers} \budget[\buyer] \log(\util[\buyer](\allocation[\buyer]))$ w.r.t. $\allocation[\buyer][\good]$ in \Cref{fisher-program} are bounded at $\allocation[\buyer][\good] = 0$ for all $\buyer \in \buyers$ and $\good \in \goods$ by reparametrizing the program as $\sum_{\good \in \goods} \price[\good] + \sum_{\buyer \in \buyers} \log(\util[\buyer](\allocation[\buyer]) + \delta)$ for $\delta > 0$.}
\end{corollary}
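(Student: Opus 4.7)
The plan is to instantiate \Cref{thm:fisher-gdaxs} for the Fisher-market min-max Stackelberg game in \Cref{fisher-program}, with the explicit Lagrangian oracle supplied by \Cref{corr:fisher-lagrangian-oracle}, and then translate the resulting Stackelberg approximation guarantee into the language of competitive equilibrium.

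First, I would verify that \Cref{main-assum} holds for \Cref{fisher-program}. Slater's condition is immediate because $\budget[\buyer] > 0$ for every buyer, so the allocation $\allocation[\buyer] = \zeros$ strictly satisfies $\allocation[\buyer] \cdot \price < \budget[\buyer]$ for all prices. Convex-concavity of the objective follows from linearity of $\sum_{\good} \price[\good]$ in $\price$ together with concavity of $\budget[\buyer] \log(\util[\buyer](\allocation[\buyer]))$ in $\allocation$ (since $\log$ is concave and monotone and each $\util[\buyer]$ is concave). The constraints $g_{\buyer}(\price,\allocation) = \budget[\buyer] - \allocation[\buyer]\cdot\price$ are bilinear, hence linear (a fortiori convex) in $\price$ and linear (a fortiori concave) in $\allocation$. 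Continuity of the partial gradients in $(\price,\allocation)$ holds on any compact set bounded away from $\allocation[\buyer][\good] = 0$; the footnote's reparametrization $\util[\buyer](\allocation[\buyer]) + \delta$ ensures this globally and can be taken with $\delta \to 0$ at the end.

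Next, I would observe that \Cref{corr:fisher-lagrangian-oracle} supplies the optimal KKT multipliers $\langmult^* = \ones[\numbuyers]$ in closed form, so no oracle call is actually required: substituting $\langmult^*$ into $\lang[\price](\allocation,\langmult^*)$ and using $\grad[\price]\lang[\price](\allocation,\ones[\numbuyers]) = \ones[\numgoods] - \sum_{\buyer}\allocation[\buyer]$ together with $\grad[{\allocation[\buyer]}]\obj(\price,\allocation) = \frac{\budget[\buyer]}{\util[\buyer](\allocation[\buyer])}\grad[{\allocation[\buyer]}]\util[\buyer](\allocation[\buyer])$, the GDALO updates of \Cref{alg:gdadxs} coincide line-for-line with the MBRD updates of \Cref{alg:myopic-br}. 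Thus MBRD is a literal instantiation of GDALO on this game.

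Having identified MBRD with GDALO, I would invoke \Cref{thm:fisher-gdaxs} directly. With $\learnrate[\price][\iter] = \learnrate[\allocation][\iter] = \nicefrac{1}{\sqrt{\iters}}$, it yields
\begin{equation*}
\Ex\!\left[\obj(\widehat{\price},\widehat{\allocation})\right] - \obj(\price^{*},\allocation^{*}) = O\!\left(\tfrac{1}{\sqrt{\iters}}\right),
\end{equation*}
where $\obj$ is the objective in \Cref{fisher-program}. Setting $\iters = \Theta(\nicefrac{1}{\varepsilon^{2}})$ produces an $(\varepsilon,\varepsilon)$-Stackelberg equilibrium of \Cref{fisher-program} in expectation. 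Finally, I would translate this back: the outer-player slack of $\varepsilon$ bounds the market-clearing error $\bigl\|\ones[\numgoods] - \sum_{\buyer}\widehat{\allocation}[\buyer]\bigr\|$ (up to constants depending on the diameter of the price simplex implied by $\sum_{\good}\price[\good]^{*} = \sum_{\buyer}\budget[\buyer]$), while the inner-player slack bounds each buyer's distance from utility maximization subject to $\widehat{\allocation}[\buyer]\cdot\widehat{\price} \leq \budget[\buyer]$ via the equivalence between competitive and Stackelberg equilibria of \citeauthor{goktas2021minmax}. Together these two slacks yield an $\varepsilon$-competitive equilibrium with $\varepsilon$-utility maximizing allocations, giving the claimed $O(\nicefrac{1}{\varepsilon^2})$ iteration count.

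The main obstacle I anticipate is the boundedness of the gradient Lipschitz constants $\lipschitz[\lang]$ and $\lipschitz[\obj]$ appearing in \Cref{thm:fisher-gdaxs}. The $\price$-iterates must be shown to remain in a compact set (which follows from $\sum_{\good}\price[\good] \leq \sum_{\buyer}\budget[\buyer]$ at and near equilibrium, and can be enforced by projection onto a sufficiently large simplex without changing the optimum), and the term $\frac{1}{\util[\buyer](\allocation[\buyer])}$ must be kept bounded away from infinity, which is precisely what the $\delta$-reparametrization of the footnote guarantees. Once these uniform bounds are in place, the rest is a direct application of the already-established theorem.
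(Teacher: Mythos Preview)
Your proposal is correct and follows essentially the same approach as the paper, which presents this result as a direct corollary of \Cref{thm:fisher-gdaxs} without a standalone proof: the paper has already established (just before the corollary) that MBRD is GDALO instantiated on \Cref{fisher-program} with the oracle $\langmult^* = \ones[\numbuyers]$ from \Cref{corr:fisher-lagrangian-oracle}, so the $O(\nicefrac{1}{\varepsilon^2})$ rate follows immediately. Your verification of \Cref{main-assum}, the identification of the two algorithms, and the handling of gradient boundedness via the $\delta$-reparametrization are exactly the details the paper leaves implicit.
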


\paragraph{Experiments}
As we are not able to prove average-iterate convergence for GDALO, we ran a series of experiments on Fisher markets 
in which we track whether or not the sequence of average iterates produced by MBRD converges to a competitive equilibrium.%
\footnote{Our code can be found at \coderepo.}
We consider three buyer utility structures, each of which endows \Cref{fisher-program} with different smoothness and convexity properties, thus allowing us to compare the efficiency of the algorithms under these varying conditions.
\if 0
Let $\valuation[\buyer] \in \R^\numgoods$ be a vector of parameters that describes the utility function of buyer $\buyer \in \buyers$.
We consider the following utility function classes:
linear: $\util[\buyer](\allocation[\buyer]) = \sum_{\good \in \goods} \valuation[\buyer][\good] \allocation[\buyer][\good]$,  Cobb-Douglas:  $\util[\buyer](\allocation[\buyer]) = \prod_{\good \in \goods} \allocation[\buyer][\good]^{\valuation[\buyer][\good]}$, Leontief:  $\util[\buyer](\allocation[\buyer]) = \min_{\good \in \goods} \left\{ \frac{\allocation[\buyer][\good]}{\valuation[\buyer][\good]}\right\}$.
For linear and Cobb-Douglas Fisher markets, \Cref{main-assum} is satisfied.
Additionally, for Cobb-Douglas and Leontief Fisher markets, the objective function is linear-strictly-concave, while it is linear-concave for linear Fisher markets.
\fi 
We summarize the properties of \Cref{fisher-program} in these three Fisher markets in Table~\ref{tab:util-prop}, and include a more detailed description of our experimental setup in \Cref{sec-app:fisher}.

\begin{table}[ht]
\caption{\label{tab:util-prop}
Smoothness and convexity properties 
assuming different utility functions. Note that \Cref{main-assum} does not hold for Leontief utilities, because they are not differentiable.}
\begin{center}
\begin{tabular}{|p{0.08\textwidth}|p{0.06\textwidth}|p{0.06\textwidth}|p{0.1\textwidth}|}\hline
 Buyer utilities & Linear-concave & Linear-strictly concave & \Cref{main-assum} holds \\ \hline
Linear & \checkmark & \checkmark & \checkmark\\ \hline
Cobb-Douglas & $\times$  & \checkmark & \checkmark\\ \hline
Leontief & $\times$ & $\times$  & \checkmark \\\hline
\end{tabular}
\end{center}
\end{table}  

Let $\{\outer[][\iter], \inner[][\iter]\}_{\iter =1}^\iters $ be the sequence of iterates generated by MBRD and let $\bar{\price}^\iter = \nicefrac{1}{\iter} \sum_{\iter = 1}^\iters \price^{(\iter)}$.
\Cref{fig:experiments} then depicts average exploitability,
i.e. $\forall \iter \in \N_+$,
$$\max_{\allocation \in \R^{\numbuyers \times \numgoods}_+ : \allocation \bar{\price}^\iter \leq \budget} \sum_{\good \in \goods} \bar{\price}_{\good}^\iter + \sum_{\buyer \in \buyers}  \budget[\buyer] \log \left(  \util[\buyer](\allocation[\buyer]) \right) -$$
$$\min_{\price \in \R_+^\numgoods} \max_{\allocation \in \R^{\numbuyers \times \numgoods}_+ :  \allocation \price \leq \budget} \sum_{\good \in \goods} \price[\good] + \sum_{\buyer \in \buyers}  \budget[\buyer] \log \left(  \util[\buyer] (\allocation[\buyer]) \right) \enspace ,$$
across all runs, divided by $\nicefrac{1}{\sqrt{T}}$. Note that if this quantity is constant then the (average) iterates converge empirically, at a rate of $O(\nicefrac{1}{\sqrt{\iters}})$, while if it is an increasing (decreasing) function then the iterates converge empirically at a rate slower (faster) than $O(\nicefrac{1}{\sqrt{\iters}})$.

Convergence is fastest in Fisher markets with Cobb-Douglas utilities, followed by linear, and then Leontief.
Linear and Cobb-Douglas Fisher markets appears to converge at a rate faster than $O(\nicefrac{1}{\sqrt{T}})$ in both 
Like for linear utilities, the objective function is twice differentiable for Cobb-Douglas utilities, but for Cobb-Douglas utilities the objective is also linear-strictly-concave, which could explain the faster convergence rate.
Fisher markets with Leontief utilities, in which the objective function is not differentiable, are the hardest markets of the three for our algorithms to solve; we seem to obtain a convergence rate slower than $O(\nicefrac{1}{\sqrt{T}})$.
Our experiments suggest that the convergence of MBRD  (\Cref{alg:myopic-br}), and more generally, GDALO (\Cref{alg:gdadxs}) can be improved to convergence in average iterates at a $O(\nicefrac{1}{\sqrt{T}})$ rate, when \Cref{main-assum} holds.

\begin{figure}[h!]
    \centering
        \caption{Average exploitability 
        divided by $\nicefrac{1}{\sqrt{\iters}}$ after running MBRD (\Cref{alg:myopic-br})  on randomly initialized linear, Cobb-Douglas, and Leontief Fisher markets, respectively.}
    \label{fig:experiments}
    \includegraphics[scale=0.20]{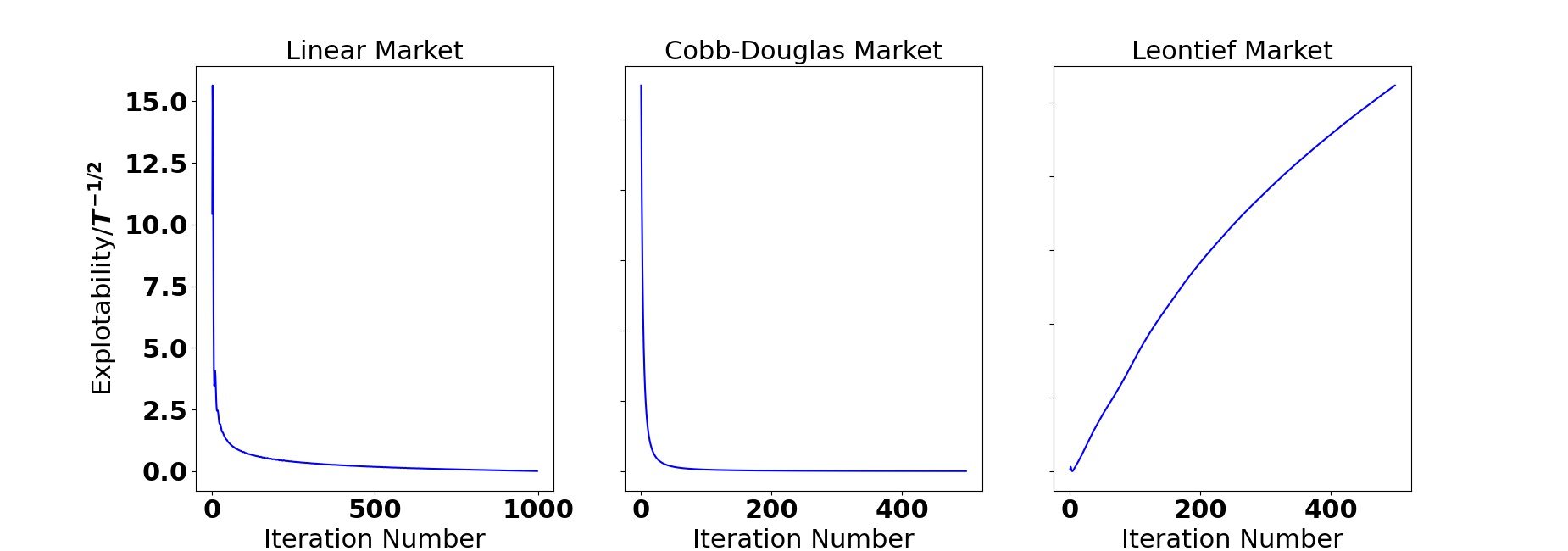}
\end{figure}

\section{Conclusion}
\label{sec:conc}

We began this paper by observing that a straightforward generalization of GDA to min-max Stackelberg games does not converge to a Stackelberg equilibrium in convex-concave min-max Stackelberg games.
We then introduced two variants of GDA that do converge in polynomial time to Stackelberg equilibria in a large class of min-max Stackelberg games.
Both of our algorithms, LGDA and GDALO, converge in $O(\nicefrac{1}{\varepsilon^2})$ iterations to an $\varepsilon$-Stackelberg equilibrium.
While LGDA converges in averages iterates only in convex-\emph{strictly}-concave min-max Stackelberg games under standard smoothness assumptions, GDALO converges in expected iterates in all convex-concave min-max Stackelberg games.
The iteration complexity of these algorithms improve on the state-of-the-art nested GDA algorithm proposed by \citeauthor{goktas2021minmax}   \cite{goktas2021minmax}, which computes an equilibrium in $O(\nicefrac{1}{\varepsilon^3})$ iterations.

We then applied GDALO the computation of competitive equilibria in Fisher markets, which yielded myopic best-response dynamics for Fisher markets---a new form of t\^atonnement in which both the buyers and sellers are boundedly rational.
Our experiments suggest that GDALO's expected iterate convergence can be improved to average 
iterate convergence, since in all experiments, the average iterates do indeed converge to competitive equilibria.
Additionally, our experiments suggest avenues for future work, namely investigating how varying the degree of smoothness of the objective function can impact convergence rates.

\bibliographystyle{ACM-Reference-Format}
\bibliography{references.bib}

\newpage
\onecolumn
\appendix

\section{An Economic Application: Details}\label{sec-app:fisher}

Our experimental goal was to understand if GDALO (\Cref{alg:gdadxs}) converges in average iterates and if so how the rate of convergences changes under different utility structures, i.e. different smoothness and convexity properties of the objective function. 

To achieve this goal, we ran multiple experiments, each time recording the prices and allocations during each iteration $t$ of the main (outer) loop. We checked for each experiment if the average iterates converged. We have found no experiment for which there was no average iterate convergence.
Let $\{\outer[][\iter], \inner[][\iter]\}_{\iter =1}^\iters $ be the sequence of iterates generated by MBRD and let $\bar{\price}^\iter = \nicefrac{1}{\iter} \sum_{\iter = 1}^\iters \price^{(\iter)}$.
\Cref{fig:experiments} then depicts average exploitability,
i.e. $\forall \iter \in \N_+$,
$$\max_{\allocation \in \R^{\numbuyers \times \numgoods}_+ : \allocation \bar{\price}^\iter \leq \budget} \sum_{\good \in \goods} \bar{\price}_{\good}^\iter + \sum_{\buyer \in \buyers}  \budget[\buyer] \log \left(  \util[\buyer](\allocation[\buyer]) \right) -$$
$$\min_{\price \in \R_+^\numgoods} \max_{\allocation \in \R^{\numbuyers \times \numgoods}_+ :  \allocation \price \leq \budget} \sum_{\good \in \goods} \price[\good] + \sum_{\buyer \in \buyers}  \budget[\buyer] \log \left(  \util[\buyer] (\allocation[\buyer]) \right) \enspace ,$$
across all runs, divided by $\nicefrac{1}{\sqrt{T}}$.

\paragraph{Hyperparameters}
We randomly initialized 500 different linear, Cobb-Douglas, Leontief Fisher markets, each with $5$ buyers and $8$ goods.
Buyer $\buyer$'s budget $\budget[\buyer]$ was drawn randomly from a uniform distribution ranging from $10$ to $20$ (i.e., $U[10,20]$), while each buyer $\buyer$'s valuation for good $\good$, $\valuation[i][j]$, was drawn randomly from $U[5,15]$.
We ran MBRD (\Cref{alg:myopic-br}) for 1000, 500, and 500 iterations for linear, Cobb-Douglas, and Leontief Fisher markets, respectively.
We started the algorithm with initial prices drawn randomly from $U[5,15]$.
%
After manual hyper-parameter tuning, we opted for fixed learning rates of $\forall \iter \in \N_+, \learnrate[\outer][\iter] = 3$, $\learnrate[\inner][\iter] = 1$ for Cobb-Douglas and Leontief Fisher markets, while we picked fixed learning rates of $\forall \iter \in \N_+, \learnrate[\outer][\iter] = 3$, $\learnrate[\inner][\iter] = 0.1$ for linear Fisher markets.

\paragraph{Programming Languages, Packages, and Licensing}
We ran our experiments in Python 3.7 \cite{van1995python}, using NumPy \cite{numpy}, Pandas \cite{pandas}, and CVXPY \cite{diamond2016cvxpy}.
\Cref{fig:experiments} was graphed using Matplotlib \cite{matplotlib}.

Python software and documentation are licensed under the PSF License Agreement. Numpy is distributed under a liberal BSD license. Pandas is distributed under a new BSD license. Matplotlib only uses BSD compatible code, and its license is based on the PSF license. CVXPY is licensed under an APACHE license.

\paragraph{Implementation Details}
In order to project each allocation computed onto the budget set of the consumers, i.e., $\{\allocation \in \R^{\numbuyers \times \numgoods}_+ \mid \allocation\price \leq \budget\}$, we used the alternating projection algorithm for convex sets, and alternatively projected onto the sets $\R^{\numbuyers \times \numgoods}_+$ and $\{\allocation \in \R^{\numbuyers \times \numgoods} \mid \allocation\price \leq \budget\}$.

\paragraph{Computational Resources}
Our experiments were run on MacOS machine with 8GB RAM and an Apple M1 chip, and took about 2 hours to run. Only CPU resources were used.

\paragraph{Code Repository}
The data our experiments generated, as well as the code used to produce our visualizations and run the statistical tests, can be found in our code repository \\({\color{blue}\rawcoderepo}).

\section{Omitted Proofs Section \ref{sec:GDA}}\label{sec:app-GDA-proofs}

\begin{proof}[Proof of \cref{thm:stackelberg-equiv}]
By the definition of the Lagrangian,\\ $\max_{\inner \in \innerset : \constr(\outer, \inner) \geq \zeros} \obj(\outer, \inner) =
\max_{\inner \in \innerset } \inf_{\langmult \geq \zeros} \lang[\outer]( \inner, \langmult)$.
Slater's condition guarantees that the infimum exists, so that it can be replaced by a minimum.
Thus, $\max_{\inner \in \innerset : \constr(\outer, \inner) \geq \zeros} \obj(\outer, \inner) =
\max_{\inner \in \innerset } \min_{\langmult \geq \zeros} \lang[\outer]( \inner, \langmult)$.
In other words,
we can re-express the inner player's maximization problem as a convex-concave min-max game via the Lagrangian.
Slater's condition also implies strong duality, meaning $\max_{\inner \in \innerset } \min_{\langmult \geq \zeros} \lang[\outer]( \inner, \langmult) = \min_{\langmult \geq \zeros} \max_{\inner \in \innerset } \lang[\outer]( \inner, \langmult)$.
It follows that we can re-express any min-max Stackelberg game with dependent strategy sets as a three-player game with payoff function $\lang[\outer](\inner, \langmult)$, where the outer player, $\outer$, moves first and the inner two players, $\inner$ and $\langmult$, play a Nash equilibrium.
\if 0
: i.e.,
\begin{align}
    \min_{\outer \in \outerset} \max_{\inner \in \innerset : \constr(\outer, \inner) \geq \zeros} \obj(\outer, \inner) = \min_{\outer \in \outerset} \max_{\inner \in \innerset } \min_{\langmult \geq \zeros} \lang[\outer]( \inner, \langmult)
    \enspace ,
\end{align} 
where the order of play of the $\langmult$ and $\inner$ players can be reversed.
\fi
\end{proof}
\section{Omitted Proofs Section \ref{sec:oracle}}\label{sec:app-gdadxs-proofs}

\begin{proof}[Proof Sketch of \Cref{thm:fisher-vgdad}]
Note that when $\obj$ is convex-strictly-concave in $\outer$ and $\inner$, then $\lang[\outer](\inner, \langmult^*) = \obj(\outer, \inner) + \sum_{\numconstr = 1}^\numconstrs \langmult[\numconstr]^* \constr[\numconstr](\outer, \inner)$ is also convex-strictly-concave, since $\sum_{\numconstr = 1}^\numconstrs \langmult[\numconstr]^* \constr[\numconstr](\outer, \inner)$ is convex-concave, and the sum of a convex-concave function and a convex-strictly-concave function is convex-strictly-concave.
This implies that for all $\outer \in \outerset$, there exists a unique $\inner^*(\outer)$ that solves $\lang[\outer](\inner, \langmult^*)$.
Thus, the Lagrangian is well-defined, i.e., non-null, and we can recover the optimal primal variables from the Lagrangian \cite{boyd2004convex}.
Convergence of GDA follows as a direct extension of \citeauthor{nedic2009gda}'s Theorem 3.1 \cite{nedic2009gda}.
\end{proof}

\begin{lemma}\label{lemma:basic-iterate-relationships}
Let the sequences $\left\{\outer[][\iter]\right\}$ and $\left\{\inner[][\iter]\right\}$ be generated by \Cref{alg:gdadxs}. Suppose that \cref{main-assum} holds. Let $\lipschitz[\lang] = \max_{(\outer, \inner) \in \outerset \times \innerset}\\ \left\| \grad \lang[\outer]( \inner, \langmult^*) \right\|$ and $\lipschitz[\obj] = \max_{(\outer, \inner) \in \outerset \times \innerset} \left\| \grad \obj(\outer, \inner) \right\|$, we then have:

\noindent
(a) For any $\outer \in \outerset$ and for all $\iter \in \N_+$,
\begin{align}
&\left\|\outer[][\iter+1]-\outer\right\|^{2} \leq\left\|\outer[][\iter]-\outer\right\|^{2}-2 \learnrate[ ]\left(\lang[{\outer[][\iter]}]\left( \inner[][\iter], \langmult^* \right)-\lang[{\outer}]\left( \inner[][\iter], \langmult^*\right)\right)+{\learnrate[ ]}^{2}\lipschitz[\lang]^{2} .
\end{align}
(b) For any $\inner \in \innerset$ and for all $\iter \in \N_+$,
\begin{align}
&\left\|\inner[][\iter + 1]-\inner\right\|^{2}  \leq\left\|\inner[][\iter]-\inner\right\|^{2}+2 \learnrate[ ]\left(\obj\left(\outer[][\iter ], \inner[][\iter]\right)-\obj\left(\outer[][\iter ], \inner\right)\right)+{\learnrate[ ]}^{2}\lipschitz[\obj]^{2}
\end{align}
\end{lemma}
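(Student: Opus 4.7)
Both inequalities are standard one-step projected gradient estimates, so the plan is, for each part, to invoke non-expansiveness of the projection, expand the squared norm, apply convexity (or concavity) of the relevant function, and then absorb the gradient norm into the stated Lipschitz constant.

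For part (a), I would start from the update rule
\begin{align*}
\outer[][\iter+1] = \project[\outerset]\!\left(\outer[][\iter] - \learnrate[ ]\,\grad[\outer]\lang[{\outer[][\iter]}](\inner[][\iter],\langmult^*)\right).
\end{align*}
Since $\outerset$ is convex and $\outer \in \outerset$, non-expansiveness of $\project[\outerset]$ gives $\|\outer[][\iter+1]-\outer\|^2 \leq \|\outer[][\iter]-\learnrate[ ]\grad[\outer]\lang[{\outer[][\iter]}](\inner[][\iter],\langmult^*)-\outer\|^2$. Expanding the right-hand side produces a cross term $-2\learnrate[ ]\langle \grad[\outer]\lang[{\outer[][\iter]}](\inner[][\iter],\langmult^*),\, \outer[][\iter]-\outer\rangle$ and a squared-gradient term bounded by $\learnrate[ ]^{2}\lipschitz[\lang]^{2}$. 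Under \Cref{main-assum}, each $\constr[\numconstr]$ is convex in $\outer$ and $\langmult^*\geq \zeros$, so $\lang[\cdot](\inner[][\iter],\langmult^*)$ is convex in its first argument, and the subgradient inequality yields
\begin{align*}
\langle \grad[\outer]\lang[{\outer[][\iter]}](\inner[][\iter],\langmult^*),\, \outer[][\iter]-\outer\rangle \;\geq\; \lang[{\outer[][\iter]}](\inner[][\iter],\langmult^*)-\lang[\outer](\inner[][\iter],\langmult^*).
\end{align*}
Substituting back gives exactly (a).

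For part (b), the plan is symmetric but uses concavity of $\obj$ in $\inner$. From $\inner[][\iter+1] = \project[C_{\iter}]\!\left(\inner[][\iter] + \learnrate[ ]\,\grad[\inner]\obj(\outer[][\iter],\inner[][\iter])\right)$ with $C_{\iter} = \{\inner \in \innerset : \constr(\outer[][\iter],\inner) \geq \zeros\}$, provided $\inner \in C_{\iter}$ the projection is non-expansive at $\inner$, so $\|\inner[][\iter+1]-\inner\|^2 \leq \|\inner[][\iter]+\learnrate[ ]\grad[\inner]\obj(\outer[][\iter],\inner[][\iter])-\inner\|^2$. Expanding, invoking the supergradient inequality $\langle \grad[\inner]\obj(\outer[][\iter],\inner[][\iter]),\, \inner-\inner[][\iter]\rangle \geq \obj(\outer[][\iter],\inner)-\obj(\outer[][\iter],\inner[][\iter])$, and bounding $\|\grad[\inner]\obj(\outer[][\iter],\inner[][\iter])\|\leq \lipschitz[\obj]$ delivers (b).

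The only nontrivial point I see, and the one I expect to be the main obstacle, is a quantifier subtlety in (b): non-expansiveness of $\project[C_{\iter}]$ requires the reference point $\inner$ to lie in $C_{\iter}$, not merely in $\innerset$. For the intended application in \Cref{thm:fisher-gdaxs} one takes $\inner=\inner^{*}$, which satisfies $\constr(\outer^{*},\inner^{*})\geq \zeros$ but not necessarily $\constr(\outer[][\iter],\inner^{*})\geq \zeros$, so some care is needed---either by reading the lemma with the implicit restriction $\inner \in C_{\iter}$, or by supplying an auxiliary feasibility argument (for instance, arguing that the relevant equilibrium reference point remains feasible at each iterate under the structure of the problem, or substituting the projection of $\inner^{*}$ onto $C_{\iter}$ and controlling the resulting error). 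Once this is handled, the remaining steps are routine algebra.
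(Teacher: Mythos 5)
Your proposal is correct and follows essentially the same route as the paper's proof: non-expansiveness of the projection, expansion of the squared norm, the subgradient (resp.\ supergradient) inequality from convexity of $\lang[\cdot](\inner,\langmult^*)$ in $\outer$ (resp.\ concavity of $\obj(\outer,\cdot)$ in $\inner$), and the bound $\|\grad\|\leq \lipschitz[\lang]$ or $\lipschitz[\obj]$. The quantifier subtlety you flag in part (b) is genuine: non-expansiveness of $\project[C_\iter]$ at the reference point requires $\inner \in C_\iter = \{\inner\in\innerset : \constr(\outer^{(\iter)},\inner)\geq\zeros\}$, yet the lemma is stated for all $\inner\in\innerset$, and the paper's own proof applies the projection inequality without addressing this --- so your observation identifies a gap in the paper's argument rather than in yours, and the downstream use with $\inner=\inner^*$ in the convergence theorem indeed needs the feasibility of $\inner^*$ at each iterate to be justified.
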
 

\begin{proof}
(a) By the non-expansivity of the projection operation and the definition of \cref{alg:gdadxs} we obtain for any $\outer \in \outerset$ and all $\iter \in \N_+$,

\begin{align}
\left\|\outer[][\iter+1]-\outer\right\|^{2} &=\left\|\project[\outerset]\left[\outer[][\iter]-\learnrate[ ] \grad[\outer]\lang[{\outer[][\iter]}]\left(\inner[][\iter], \langmult^* \right)\right]-\outer\right\|^{2} \\
& \leq\left\|\outer[][\iter]-\learnrate[ ] \grad[\outer]\lang[{\outer[][\iter]}]\left( \inner[][\iter], \langmult^* \right) - \outer\right\|^{2} \\
&=\left\|\outer[][\iter]-\outer\right\|^{2}  -2 \learnrate[ ] \grad[\outer]\lang[{\outer[][\iter]}]\left( \inner[][\iter], \langmult^* \right)^T\left(\outer[][\iter]-\outer\right)  + {\learnrate[ ]}^{2}\left\|\grad[\outer]\lang[{\outer[][\iter]}]\left( \inner[][\iter], \langmult^* \right)\right\|^{2} .
\end{align}

Since the Lagrangian function $\lang[\outer](\inner, \langmult^*)$ is convex in $\outer$ for each $\inner \in \innerset$, and since $\grad[\outer]\obj\left(\outer[][\iter], \inner[][\iter]\right)$ is a subgradient of $\obj\left(\outer, \inner[][\iter]\right)$ with respect to $\outer$ at $\outer=\outer[][\iter]$, we obtain for any $\outer \in \outerset$,
\begin{align}
&\grad[\outer]\lang[{\outer[][\iter]}]\left( \inner[][\iter], \langmult^* \right)^T\left(\outer-\outer[][\iter]\right) \leq \lang[{\outer}]\left( \inner[][\iter], \langmult^*\right) - \lang[{\outer[][\iter]}]\left( \inner[][\iter], \langmult^*\right)
\end{align}
or equivalently
\begin{align}
-\grad[\outer]\lang[{\outer[][\iter]}]\left( \inner[][\iter], \langmult^* \right)^T\left(\outer[][\iter] - \outer\right)   \leq - \left(\lang[{\outer[][\iter]}]\left( \inner[][\iter], \langmult^*\right) - \lang[{\outer}]\left( \inner[][\iter], \langmult^*\right)  \right)
\end{align}
Hence, for any $\outer \in \outerset$ and all $\iter \in \N_+$,
\begin{align}
\left\|\outer[][\iter+1]-\outer\right\|^{2} & \leq\left\|\outer[][\iter]-\outer\right\|^{2}   
-2 \learnrate[ ]\left(\lang[{\outer[][\iter]}]\left( \inner[][\iter], \langmult^* \right)-\lang[{\outer}]\left( \inner[][\iter], \langmult^*\right)\right) +{\learnrate[ ]}^{2}\left\|\grad[\outer]\lang[{\outer[][\iter]}]\left( \inner[][\iter], \langmult^*\right)\right\|^{2} .
\end{align}
Since $\lang$ is continuously differentiable, it is $\lipschitz[\lang]$-Lipschitz continuous with $\lipschitz[\lang] = \max_{(\outer, \inner) \in \outerset \times \innerset} \left\| \grad \lang[\outer]( \inner, \langmult^*) \right\|$. Hence, we have for any $\outer \in \outerset$ and $\iter \in \N_+$:
\begin{align}
&\left\|\outer[][\iter+1]-\outer\right\|^{2} \leq\left\|\outer[][\iter]-\outer\right\|^{2}  -2 \learnrate[ ]\left(\lang[{\outer[][\iter]}]\left( \inner[][\iter], \langmult^* \right)-\lang[{\outer}]\left( \inner[][\iter], \langmult^*\right)\right) \\
&+{\learnrate[ ]}^{2}\lipschitz[\lang]^{2} .
\end{align}
(b) Similarly, by using the non-expansivity of the projection operation and the definition of \cref{alg:gdadxs} we obtain for any $\inner \in \innerset$ and for all $\iter \in \N_+$:
\begin{align}
\left\|\inner[][\iter + 1]-\inner\right\|^{2} 
&= \left\|\project[\{{ \inner \in 
    \innerset : \constr (\outer^{(\iter)}, \inner) \geq 0}\}] \left[ \inner^{(\iter)} + \right. \right. \left. \left. \learnrate[ ][ ]  \grad[\inner] \obj (\outer^{(\iter)}, \inner^{(\iter)})   \right] - \inner \right\|^2\\
&\leq \left\| \inner^{(\iter)}   +  \learnrate[ ][ ]  \grad[\inner] \obj (\outer^{(\iter)}, \inner^{(\iter)})   - \inner \right\|^2\\
&\leq \left\|\inner[][\iter]-\inner\right\|^{2}  + 2 \learnrate[ ]\left(\inner[][\iter]-\inner\right)^T \grad[\inner]\obj\left(\outer[][\iter ], \inner[][\iter]\right) +{\learnrate[ ]}^{2}\left\|\grad[\inner]\obj\left(\outer[][\iter ], \inner[][\iter]\right)\right\|^{2} 
\end{align}

\noindent
Since $\grad[\inner]\obj\left(\outer[][\iter ], \inner[][\iter]\right)$ is a subgradient of the concave function $\obj\left(\outer[][\iter ], \inner\right)$ at $\inner=\inner[][\iter]$, we have for all $\inner \in \innerset$,
\begin{align}
\grad[\inner]\obj\left(\outer[][\iter ], \inner[][\iter]\right)^T \left(\inner[][\iter]-\inner\right)  
\leq \obj\left(\outer[][\iter ], \inner[][\iter]\right)-\obj\left(\outer[][\iter ], \inner\right)
\end{align}
Hence, for any $\inner \in \innerset$ and all $\iter \in \N_+$
\begin{align}
&\left\|\inner[][\iter + 1]-\inner\right\|^{2} \leq\left\|\inner[][\iter]-\inner\right\|^{2} +2 \learnrate[ ]\left(\obj\left(\outer[][\iter ], \inner[][\iter]\right)-\obj\left(\outer[][\iter ], \inner\right)\right)  +{\learnrate[ ]}^{2}\left\|\grad[\inner]\obj\left(\outer[][\iter ], \inner[][\iter]\right)\right\|^{2}
\end{align}
Since $\obj$ is continuously differentiable, it is $\lipschitz[\obj]$-Lipschitz continuous with $\lipschitz[\obj] = \max_{(\outer, \inner) \in \outerset \times \innerset} \left\| \grad \obj(\outer, \inner) \right\|$. Hence, we have for any $\inner \in \innerset$ and $\iter \in \N_+$:
\begin{align}
&\left\|\inner[][\iter + 1]-\inner\right\|^{2} \leq\left\|\inner[][\iter]-\inner\right\|^{2} + 2 \learnrate[ ]\left(\obj\left(\outer[][\iter ], \inner[][\iter]\right)-\obj\left(\outer[][\iter ], \inner\right)\right)+{\learnrate[ ]}^{2}\lipschitz[\obj]^{2}
\end{align}
\end{proof}

\begin{lemma}\label{lemma:GDA-main-lemma}
Let the sequences $\left\{\outer[][\iter ]\right\}$ and $\left\{\inner[][\iter]\right\}$ be generated by \Cref{alg:gdadxs}. Suppose that \cref{main-assum} holds. Let $\lipschitz[\lang] = \max_{(\outer, \inner) \in \outerset \times \innerset}\\ \left\| \grad \lang[\outer]( \inner, \langmult^*) \right\|$ and $\lipschitz[\obj] = \max_{(\outer, \inner) \in \outerset \times \innerset} \left\| \grad \obj(\outer, \inner) \right\|$, and $\avgouter[][\iters]= \frac{1}{\iters} \sum_{\iter = 1}^\iters \outer^{(\iter)}$ and $\avginner[][\iters]= \frac{1}{\iters} \sum_{\iter = 1}^\iters \inner^{(\iter)}$.

(a) For any $\outer \in \outerset$ and for all $\iter \in \N_+$:
\begin{align}
&\frac{1}{\iters} \sum_{\iter=1}^{\iters} \lang[{\outer[][\iter]}]\left( \inner[][\iter], \langmult^*\right)-\lang[{\outer}]\left( \avginner, \langmult^*\right)  \leq \frac{\left\|\outer[][0]-\outer\right\|^{2}}{2 \learnrate[ ] \iters}+\frac{\learnrate[ ] \lipschitz[\lang]^{2}}{2} \quad \text { for any } \outer \in \outerset\label{eq:saddle-first-eq-lemma} 
\end{align}

(b) For any $\inner \in \innerset$ and for all $\iter \in \N_+$:
\begin{align}
&-\frac{\left\| \inner[][0] -\inner\right\|^{2}}{2 \learnrate[ ] \iters}-\frac{\learnrate[ ] \lipschitz[\obj]^{2}}{2}  \leq \frac{1}{\iters} \sum_{\iter=1}^{\iters} \obj\left(\outer[][\iter], \inner[][\iter]\right)-\obj\left(\avgouter, \inner\right) \label{eq:saddle-second-eq-lemma}
\end{align}
\end{lemma}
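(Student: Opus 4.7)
The plan is to prove both parts by a standard regret-style argument: starting from the per-iteration descent/ascent bounds in Lemma~\ref{lemma:basic-iterate-relationships}, I would telescope the squared-distance differences across the $\iters$ iterations, discard a non-positive terminal term, and then convert the resulting arithmetic means over $\{\inner[][\iter]\}$ and $\{\outer[][\iter]\}$ into evaluations at the averaged iterates $\avginner$ and $\avgouter$ via Jensen's inequality.

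For part~(a), I would rearrange Lemma~\ref{lemma:basic-iterate-relationships}(a) so that the Lagrangian gap $\lang[{\outer[][\iter]}](\inner[][\iter], \langmult^*) - \lang[{\outer}](\inner[][\iter], \langmult^*)$ is isolated on the left and the squared-distance difference $\|\outer[][\iter]-\outer\|^{2} - \|\outer[][\iter+1]-\outer\|^{2}$ appears on the right. Summing over the iteration index telescopes the right-hand side to $\|\outer[][0] - \outer\|^{2}$, after dropping the non-positive tail $-\|\outer[][\iters] - \outer\|^{2}$, plus a cumulative noise term $\iters {\learnrate[ ]}^{2} \lipschitz[\lang]^{2}$. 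Dividing by $2\learnrate[ ]\iters$ yields
\[
\frac{1}{\iters}\sum_{\iter=1}^{\iters} \lang[{\outer[][\iter]}](\inner[][\iter], \langmult^*) - \frac{1}{\iters}\sum_{\iter=1}^{\iters}\lang[{\outer}](\inner[][\iter], \langmult^*) \leq \frac{\|\outer[][0] - \outer\|^{2}}{2\learnrate[ ]\iters} + \frac{\learnrate[ ]\lipschitz[\lang]^{2}}{2}.
\]
To close, I would invoke Jensen's inequality on the second sum: since $\lang[{\outer}](\cdot, \langmult^*) = \obj(\outer, \cdot) + \sum_{\numconstr} \langmult[\numconstr]^{*} \constr[\numconstr](\outer, \cdot)$ is concave in $\inner$ under \Cref{main-assum} (with $\langmult^{*} \geq \zeros$), one has $\frac{1}{\iters}\sum_{\iter} \lang[{\outer}](\inner[][\iter], \langmult^*) \leq \lang[{\outer}](\avginner, \langmult^*)$. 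Replacing the second sum by this upper bound then yields (\ref{eq:saddle-first-eq-lemma}).

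For part~(b), I would follow the symmetric pattern starting from Lemma~\ref{lemma:basic-iterate-relationships}(b). After rearranging to isolate $\obj(\outer[][\iter], \inner) - \obj(\outer[][\iter], \inner[][\iter])$ on the left, telescoping the $\|\inner[][\iter]-\inner\|^{2}$ differences, and dividing by $2\learnrate[ ]\iters$, I would apply Jensen's inequality in the convex direction: convexity of $\obj(\cdot, \inner)$ in $\outer$ gives $\obj(\avgouter, \inner) \leq \frac{1}{\iters}\sum_{\iter} \obj(\outer[][\iter], \inner)$. Substituting this lower bound for $\frac{1}{\iters}\sum_{\iter} \obj(\outer[][\iter], \inner)$ and multiplying the resulting inequality through by $-1$ delivers (\ref{eq:saddle-second-eq-lemma}).

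This is a textbook regret-analysis argument and no step is genuinely difficult. The only care I would take is in orienting Jensen's inequality correctly in each part --- concavity for the Lagrangian in $\inner$ in~(a), and convexity for $\obj(\cdot, \inner)$ in $\outer$ in~(b) --- and in correctly signing the telescoping tail term so that dropping it preserves the direction of each inequality.
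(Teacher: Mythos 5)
Your proposal is correct and follows essentially the same route as the paper's proof: rearrange the per-iteration bounds from Lemma~\ref{lemma:basic-iterate-relationships}, telescope the squared distances, drop the non-positive terminal term, divide by $2\learnrate[ ]\iters$, and apply Jensen's inequality with the correct orientation (concavity of $\lang[\outer](\cdot,\langmult^*)$ in $\inner$ for part~(a), convexity of $\obj(\cdot,\inner)$ in $\outer$ for part~(b)). No gaps.
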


\begin{proof}
We first show the relation in (a). From the previous lemma, we have for any $\outer \in \outerset$ and $\iter \in \N_+$
\begin{align}
&\left\|\outer[][\iter+1]-\outer\right\|^{2} \leq\left\|\outer[][\iter]-\outer\right\|^{2}  -2 \learnrate[ ]\left(\lang[{\outer[][\iter]}]\left( \inner[][\iter], \langmult^* \right)-\lang[{\outer}]\left( \inner[][\iter], \langmult^*\right)\right)+{\learnrate[ ]}^{2} \lipschitz[\lang]^{2} .
\end{align}
Therefore,
\begin{align}
&\lang[{\outer[][\iter]}]\left( \inner[][\iter], \langmult^* \right)-\lang[{\outer}]\left( \inner[][\iter], \langmult^*\right)  \leq \frac{1}{2 \learnrate[ ]}\left(\left\|\outer[][\iter]-\outer\right\|^{2}-\left\|\outer[][\iter+1]-\outer\right\|^{2}\right)+\frac{\learnrate[ ] \lipschitz[\lang]^{2}}{2}
\end{align}
By adding these relations over $\iter=1, \ldots, \iters$, we obtain for any $\outer \in \outerset$ and $\iters \in \N_+$,
\begin{align}
&\sum_{\iter=1}^{\iters}\left(\lang[{\outer[][\iter]}]\left( \inner[][\iter], \langmult^* \right)-\lang[{\outer}]\left( \inner[][\iter], \langmult^*\right)\right)  \\&\leq \frac{1}{2 \learnrate[ ]}\left(\left\|\outer[][0]-\outer\right\|^{2}-\left\|\outer[][\iters ]-\outer\right\|^{2}\right)+\frac{\iters \learnrate[ ] \lipschitz[\lang]^{2}}{2},
\end{align}
dividing by $\iters$:
\begin{align}
&\frac{1}{\iters} \sum_{\iter=1}^{\iters}\lang[{\outer[][\iter]}]\left( \inner[][\iter], \langmult^* \right)-\frac{1}{\iters} \sum_{\iter=1}^{\iters} \lang[{\outer}]\left( \inner[][\iter], \langmult^*\right)  \leq \frac{\left\|\outer[][0]-\outer\right\|^{2}}{2 \learnrate[ ] \iters}+\frac{\learnrate[ ] \lipschitz[\lang]^{2}}{2}
\end{align}
Since the function $\lang[\outer](\inner, \langmult^*)$ is concave in $\inner$ for any fixed $\outer \in \outerset$, we have
\begin{align}
&\lang[\outer]\left( \avginner, \langmult^*\right) \geq \frac{1}{\iters} \sum_{\iter=1}^{\iters} \lang[\outer] \left( \inner[][\iter], \langmult^* \right) \quad \text { where } \avginner=\frac{1}{\iters} \sum_{\iter=1}^{\iters} \inner[][\iter] .
\end{align}
Combining the preceding two relations, we obtain for any $\outer \in \outerset$ and $\iter \in \N_+$,
\begin{align}
&\frac{1}{\iters} \sum_{\iter=1}^{\iters} \lang[{\outer[][\iter]}]\left( \inner[][\iter], \langmult^* \right)-\lang[\outer]\left( \avginner, \langmult^* \right) \leq \frac{\left\|\outer[][0]-\outer\right\|^{2}}{2 \learnrate[ ] \iters}+\frac{\learnrate[ ] \lipschitz[\lang]^{2}}{2},
\end{align}
thus establishing the relation in (a). Similarly, for (b), from the previous lemma, we have for any $\inner \in \innerset$ and $\iter \in \N_+$,
\begin{align}
&\left\|\inner[][\iter + 1]-\inner\right\|^{2} \leq\left\|\inner[][\iter]-\inner\right\|^{2} +2 \learnrate[ ]\left(\obj\left(\outer[][\iter ], \inner[][\iter]\right)-\obj\left(\outer[][\iter ], \inner\right)\right)+{\learnrate[ ]}^{2}\lipschitz[\obj]^{2}
\end{align}
Hence,
\begin{align}
&\frac{1}{2 \learnrate[ ]}\left(\left\|\inner[][{\iter+1}]-\inner\right\|^{2}-\left\|\inner[][\iter]-\inner\right\|^{2}\right)-\frac{\learnrate[ ] \lipschitz[\obj]^{2}}{2} \leq \obj\left(\outer[][\iter], \inner[][\iter]\right)-\obj\left(\outer[][\iter], \inner\right)
\end{align}
By adding these relations over $\iter=1, \ldots, \iters$, we obtain for all $\inner \in \innerset$ and $\iters \in \N_+$,
\begin{align}
&\frac{1}{2 \learnrate[ ]}\left(\left\|\inner[][\iters]-\inner\right\|^{2}-\left\|\inner[][0]-\inner\right\|^{2}\right)-\frac{\iters \learnrate[ ] \lipschitz[\obj]^{2}}{2} \leq \sum_{\iter=1}^{\iters}\left(\obj\left(\outer[][\iter], \inner[][\iter]\right)-\obj\left(\outer[][\iter], \inner\right)\right)
\end{align}
implying that
\begin{align}
&-\frac{\left\|\inner[][0]-\inner\right\|^{2}}{2 \learnrate[ ] \iters}-\frac{\learnrate[ ] \lipschitz[\obj]^{2}}{2} \leq \frac{1}{\iters} \sum_{\iter=1}^{\iters} \obj\left(\outer[][\iter], \inner[][\iter]\right)-\frac{1}{\iters} \sum_{\iter=1}^{\iters} \obj\left(\outer[][\iter], \inner\right)
\end{align}
Because the function $\obj(\outer, \inner)$ is convex in $\outer$ for any fixed $\inner \in \innerset$, we have
\begin{align}
&\frac{1}{\iters} \sum_{\iter=1}^{\iters} \obj\left(\outer[][\iter], \inner\right) \geq \obj\left(\avgouter, \inner\right) \quad \text { where } \avgouter=\frac{1}{\iters} \sum_{\iter=1}^{\iters} \outer[][\iter] \text {. }
\end{align}
Combining the preceding two relations, we obtain for all $\inner \in \innerset$ and $\iter \in \N_+$,
\begin{align}
&-\frac{\left\|\inner[][0]-\inner\right\|^{2}}{2 \learnrate[ ] \iters}-\frac{\learnrate[ ] \lipschitz[\obj]^{2}}{2} \leq \frac{1}{\iters} \sum_{\iter=1}^{\iters} \obj\left(\outer[][\iter], \inner[][\iter]\right)-\obj\left(\avgouter, \inner\right)
\end{align}
establishing the relation in (b).
\end{proof}

\begin{proof}[Proof of \cref{thm:fisher-gdaxs}]
From relation (a) given in the previous lemma, we have:
\begin{align}
&\frac{1}{\iters} \sum_{\iter=1}^{\iters} \lang[{\outer[][\iter]}]\left( \inner[][\iter], \langmult^*\right)- \min_{\outer \in \outerset} \lang[{\outer}]\left( \avginner, \langmult^*\right) \leq \frac{\left\|\outer[][0]-\outer\right\|^{2}}{2 \learnrate[ ] \iters}+\frac{\learnrate[ ] \lipschitz[\lang]^{2}}{2} \\
&\frac{1}{\iters} \sum_{\iter=1}^{\iters} \lang[{\outer[][\iter]}]\left( \inner[][\iter], \langmult^*\right)- \max_{\inner \in \innerset} \min_{\outer \in \outerset} \ \lang[{\outer}]\left( \inner, \langmult^*\right) \leq \frac{\left\|\outer[][0]-\outer\right\|^{2}}{2 \learnrate[ ] \iters}+\frac{\learnrate[ ] \lipschitz[\lang]^{2}}{2}\\
&\frac{1}{\iters} \sum_{\iter=1}^{\iters} \lang[{\outer[][\iter]}]\left( \inner[][\iter], \langmult^*\right)-  \min_{\outer \in \outerset} \max_{\inner \in \innerset} \lang[{\outer}]\left( \inner, \langmult^*\right) \leq \frac{\left\|\outer[][0]-\outer\right\|^{2}}{2 \learnrate[ ] \iters}+\frac{\learnrate[ ] \lipschitz[\lang]^{2}}{2}\\
&\frac{1}{\iters} \sum_{\iter=1}^{\iters} \lang[{\outer[][\iter]}]\left( \inner[][\iter], \langmult^*\right)-  \min_{\outer \in \outerset} \max_{\inner \in \innerset} \min_{\langmult \in \R^\numconstrs_+}\lang[{\outer}]\left( \inner, \langmult\right) \leq \frac{\left\|\outer[][0]-\outer\right\|^{2}}{2 \learnrate[ ] \iters}+\frac{\learnrate[ ] \lipschitz[\lang]^{2}}{2}
\end{align}
where the penultimate line follows from the max-min inequality \cite{boyd2004convex} and the last line from the definition of $\langmult^*$. Using \cref{thm:stackelberg-equiv}, we then get:
\begin{align}
    &\frac{1}{\iters} \sum_{\iter=1}^{\iters} \lang[{\outer[][\iter]}]\left( \inner[][\iter], \langmult^*\right)-  \min_{\outer \in \outerset} \max_{\inner \in \innerset : \constr(\outer, \inner)} \obj\left( \outer, \inner\right) \leq \frac{\left\|\outer[][0]-\outer\right\|^{2}}{2 \learnrate[ ] \iters}+\frac{\learnrate[ ] \lipschitz[\lang]^{2}}{2} 
\end{align}
Note that for all $\iter \in \N_+$, we have $\lang(\outer[][\iter], \inner[][\iter], \langmult^*) = \obj(\outer[][\iter], \inner[][\iter]) + \sum_{\numconstr = 1}^\numconstrs  \langmult[\numconstr]^* \constr[\numconstr](\outer[][\iter], \inner[][\iter])\geq \obj(\outer[][\iter], \inner[][\iter])$, since $\langmult^* \in \R_+^\numconstrs$ and for all $\iter \in \N_+$, $\constr(\outer[][\iter], \inner[][\iter]) \geq \zeros$. Hence, we have:
\begin{align}
    &\frac{1}{\iters} \sum_{\iter=1}^{\iters} \obj(\outer[][\iter], \inner[][\iter])-  \min_{\outer \in \outerset} \max_{\inner \in \innerset : \constr(\outer, \inner)} \obj\left( \outer, \inner\right) \leq \frac{\left\|\outer[][0]-\outer\right\|^{2}}{2 \learnrate[ ] \iters}+\frac{\learnrate[ ] \lipschitz[\lang]^{2}}{2} \label{eq:thm-bound-combine-1}
\end{align}
Additionally, using the previous lemma, from relation (b) we have:
\begin{align}
&-\frac{\left\| \inner[][0] -\inner\right\|^{2}}{2 \learnrate[ ] \iters}-\frac{\learnrate[ ] \lipschitz[\obj]^{2}}{2} \leq \frac{1}{\iters} \sum_{\iter=1}^{\iters} \obj\left(\outer[][\iter], \inner[][\iter]\right)- \max_{\inner \in \innerset: \constr(\avgouter, \inner) \geq 0} \obj\left(\avgouter, \inner\right)\\
&-\frac{\left\| \inner[][0] -\inner\right\|^{2}}{2 \learnrate[ ] \iters}-\frac{\learnrate[ ] \lipschitz[\obj]^{2}}{2} \leq \frac{1}{\iters} \sum_{\iter=1}^{\iters} \obj\left(\outer[][\iter], \inner[][\iter]\right)- \min_{\outer \in \outerset} \max_{\inner \in \innerset: \constr(\outer, \inner) \geq 0} \obj\left(\outer, \inner\right)\label{eq:thm-bound-combine-2}  
\end{align}

\noindent
Combining equations (46) and (48), we obtain:
\begin{align}
    &-\frac{\left\| \inner[][0] -\inner\right\|^{2}}{2 \learnrate[ ] \iters}-\frac{\learnrate[ ] \lipschitz[\obj]^{2}}{2} \leq \frac{1}{\iters} \sum_{\iter=1}^{\iters} \obj\left(\outer[][\iter], \inner[][\iter]\right)- \min_{\outer \in \outerset} \max_{\inner \in \innerset: \constr(\outer, \inner) \geq 0} \obj\left(\outer, \inner\right) \leq \frac{\left\|\outer[][0]-\outer\right\|^{2}}{2 \learnrate[ ] \iters}+\frac{\learnrate[ ] \lipschitz[\lang]^{2}}{2}
\end{align}
\noindent
The result of the theorem follows directly.
\end{proof}

\begin{proof}[Proof of \cref{thm:langrangian-oracle-existence}]
    The Lagrangian, $\lang: \outerset \times \innerset \times \R_+ \to \R$, associated with the inner player's payoff maximization problem, $\max_{\Y \in \innerset : \forall i \in [n] \constr[i](\y_i,\outer) \leq c_i } \obj_1(\outer) + \sum_{i = 1}^\outerdim a_i \log(\obj_2(\outer, \y_i)) + \sum_{i = 1}^\outerdim b_i \log(\obj_2(\y_i))$, is given by:
    \begin{align}
        \lang[\outer] (\Y, \langmult) =
        \obj_1(\outer) + \sum_{i = 1}^\outerdim a_i \log(\obj_2(\outer, \y_i)) +  \sum_{i = 1}^\outerdim b_i \log(\obj_2(\y_i)) +   \sum_{i = 1}^n \langmult[i] \left( c_i - \constr[i](\y_i,\outer) \right) 
    \end{align}
    Let $\Y^* \in \argmax_{\Y \in \innerset^\outerdim} \min_{\langmult \in \R^\outerdim_+} \lang[\outer] (\Y, \langmult)$ From the first order KKT optimality conditions \cite{kuhn1951kkt}, for all $\outer \in \outerset$, $i \in [\outerdim]$, $j \in [\innerdim]$ we have:
    \begin{align}
        &\frac{\partial \lang[\outer]}{\partial y_{ij}} = \frac{a_i}{\obj_2(\outer, \inner_i^*)} \left[\frac{\partial \obj_2}{\partial y_{ij}}\right]_{\inner_i = \inner_i^*} + \frac{b_i}{\obj_3(\inner_i^*)} \left[\frac{\partial \obj_3}{\partial y_{ij}}\right]_{\inner_i = \inner_i^*} - \langmult[i]^* \left[\frac{\partial \constr[i]}{\partial y_{ij}}\right]_{\inner_i = \inner_i^*} \doteq 0
    \end{align}
Multiplying both sides by $y_{ij}^*$ and summing up across the $j$'s, for all $\outer \in \outerset$, $i \in [\outerdim]$, we get:
    \begin{align}
        & \sum_{j \in [\innerdim]}\frac{a_i}{\obj_2(\outer, \inner_i^*)} \left[\frac{\partial \obj_2}{\partial y_{ij}}\right]_{\inner_i = \inner_i^*} y_{ij}^* + \sum_{j \in [\innerdim]} \frac{b_i}{\obj_3(\inner_i^*)} \left[\frac{\partial \obj_3}{\partial y_{ij}}\right]_{\inner_i = \inner_i^*} y_{ij}^* - \sum_{j \in [\innerdim]} \langmult[i]^* \left[\frac{\partial \constr[i]}{\partial y_{ij}}\right]_{\inner_i = \inner_i^*}y_{ij}^* = 0\\
        &\frac{a_i}{\obj_2(\outer, \inner_i^*)}  \sum_{j \in [\innerdim]}\left[\frac{\partial \obj_2}{\partial y_{ij}}\right]_{\inner_i = \inner_i^*} y_{ij}^* +  \frac{b_i}{\obj_3(\inner_i^*)} \sum_{j \in [\innerdim]}\left[\frac{\partial \obj_3}{\partial y_{ij}}\right]_{\inner_i = \inner_i^*} y_{ij}^* -  \langmult[i]^* \sum_{j \in [\innerdim]} \left[\frac{\partial \constr[i]}{\partial y_{ij}}\right]_{\inner_i = \inner_i^*}y_{ij}^* = 0
    \end{align}
Recall that by Euler's theorem for homogeneous functions \cite{border2017euler}, we have for any homogeneous function $f: \outerset \to \R$, $\sum_{i}\frac{\partial f}{\partial y_{ij}} y_{ij} = f(\y)$. Hence, for all $\outer \in \outerset$, $i \in [\outerdim]$, we have:
    \begin{align}
        \frac{a_i}{\obj_2(\outer, \inner_i^*)} \obj_2(\outer, \inner_i^*) + \frac{b_i}{\obj_3(\inner_i^*)} \obj_3( \inner_i^*) -  \langmult[i]^* \constr[i](\outer, \inner_i^*)  = 0\\
        a_i + b_i - \langmult[i]^* \constr[i](\outer, \inner_i^*)  = 0
    \end{align}
From the KKT complementarity conditions, we have $\langmult[i]^* \constr[i](\outer, \inner_i^*) = c_i$, which gives us, for all $i \in [\outerdim]$:
    \begin{align}
        a_i + b_i - \langmult[i]^* c_i  = 0\\
        \langmult[i]^* = \frac{a_i + b_i }{c_i}
    \end{align}
\end{proof}


\end{document}